\title{ The Streaming-DMT of Fading Channels}
\newcounter{actr}
{\begin{list}{(\alph{actr})}{\usecounter{actr}}}{\end{list}}
\newcounter{ictr}
{\begin{list}{(\roman{ictr})}{\usecounter{ictr}}}{\end{list}}
\newtheorem{remark}{Remark}
\newtheorem{thm}{Theorem}
\newtheorem{corol}{Corollary}
\newtheorem{prop}{Proposition}
\newtheorem{defn}{Definition}
\newenvironment{new-proof}[1]
{{\em Proof }:\\}%
{ \noindent\qed }
\newcommand{\qed}{\rule[0.1ex]{1.4ex}{1.6ex}}
\newcommand{\defeq}{\stackrel{\Delta}{=}}
\newcommand{\mrm}{\mathrm}
\newcommand{\cA}{{\mathcal{A}}}
\newcommand{\cB}{{\mathcal{B}}}
\newcommand{\cC}{{\mathcal{C}}}
\newcommand{\cE}{{\mathcal{E}}}
\newcommand{\cF}{{\mathcal{F}}}
\newcommand{\cG}{{\mathcal{G}}}
\newcommand{\bH}{{\mathbf{H}}}
\newcommand{\bI}{{\mathbf{I}}}
\newcommand{\cI}{{\mathcal{I}}}
\newcommand{\cJ}{{\mathcal{J}}}
\newcommand{\CN}{{\mathcal{CN}}}
\newcommand{\Nt}{{\tilde{N}}}
\newcommand{\cO}{{\mathcal{O}}}
\newcommand{\cT}{{\mathcal{T}}}
\newcommand{\bx}{{\mathbf{x}}}
\newcommand{\bX}{{\mathbf{X}}}
\newcommand{\al}{\alpha}
\newcommand{\del}{\delta}
\newcommand{\eps}{\varepsilon}
\DeclareMathAlphabet{\mathbsf}{OT1}{cmss}{bx}{n}
\DeclareMathAlphabet{\mathssf}{OT1}{cmss}{m}{sl}
\DeclareSymbolFont{bsfletters}{OT1}{cmss}{bx}{n}
\DeclareSymbolFont{ssfletters}{OT1}{cmss}{m}{n}
\DeclareMathSymbol{\bsfGamma}{0}{bsfletters}{'000}
\DeclareMathSymbol{\ssfGamma}{0}{ssfletters}{'000}
\DeclareMathSymbol{\bsfDelta}{0}{bsfletters}{'001}
\DeclareMathSymbol{\ssfDelta}{0}{ssfletters}{'001}
\DeclareMathSymbol{\bsfTheta}{0}{bsfletters}{'002}
\DeclareMathSymbol{\ssfTheta}{0}{ssfletters}{'002}
\DeclareMathSymbol{\bsfLambda}{0}{bsfletters}{'003}
\DeclareMathSymbol{\ssfLambda}{0}{ssfletters}{'003}
\DeclareMathSymbol{\bsfXi}{0}{bsfletters}{'004}
\DeclareMathSymbol{\ssfXi}{0}{ssfletters}{'004}
\DeclareMathSymbol{\bsfPi}{0}{bsfletters}{'005}
\DeclareMathSymbol{\ssfPi}{0}{ssfletters}{'005}
\DeclareMathSymbol{\bsfSigma}{0}{bsfletters}{'006}
\DeclareMathSymbol{\ssfSigma}{0}{ssfletters}{'006}
\DeclareMathSymbol{\bsfUpsilon}{0}{bsfletters}{'007}
\DeclareMathSymbol{\ssfUpsilon}{0}{ssfletters}{'007}
\DeclareMathSymbol{\bsfPhi}{0}{bsfletters}{'010}
\DeclareMathSymbol{\ssfPhi}{0}{ssfletters}{'010}
\DeclareMathSymbol{\bsfPsi}{0}{bsfletters}{'011}
\DeclareMathSymbol{\ssfPsi}{0}{ssfletters}{'011}
\DeclareMathSymbol{\bsfOmega}{0}{bsfletters}{'012}
\DeclareMathSymbol{\ssfOmega}{0}{ssfletters}{'012}
\renewcommand{\defeq}{\triangleq}
\newcommand{\rvh}{{\mathssf{h}}}    
\newcommand{\rvbH}{{\mathbsf{H}}}
\newcommand{\rvbn}{{\mathbsf{n}}}
\newcommand{\rvbv}{{\mathbsf{v}}}
\newcommand{\rvw}{{\mathssf{w}}}    
\newcommand{\svw}{w}
\newcommand{\rvx}{{\mathssf{x}}}    
\newcommand{\rvbx}{{\mathbsf{x}}}
\newcommand{\rvy}{{\mathssf{y}}}    
\newcommand{\rvby}{{\mathbsf{y}}}
\newcommand{\rvbz}{{\mathbsf{z}}}
\newcommand{\cH}{{\mathcal H}}
\newcommand{\hvw}{\hat{\rvw}}
\newcommand{\bvw}{\bar{\rvw}}
\newcommand{\Pe}{e_{\max}}
\newcommand{\Tk}{{T_{k}}}
\newcommand{\rvbY}{\mathbsf{{Y}}}
\newcommand{\rvbX}{\mathbsf{{X}}}
\newcommand{\rvbZ}{\mathbsf{{Z}}}
\newcommand{\svbY}{\mathbf{{Y}}}
\newcommand{\svbX}{\mathbf{{X}}}
\newcommand{\svbZ}{\mathbf{{Z}}}
\newcommand{\mbC}{\mathbb{C}}
\newcommand{\Nr}{N_\mrm{r}}
\renewcommand{\Nt}{N_\mrm{t}}
\author{Ashish~Khisti~\IEEEmembership{Member,~IEEE,} and Stark C. 
  Draper~\IEEEmembership{Member,~IEEE} \thanks{Part of this work was
    presented at the International Symposium on Information Theory,
    St. Petersburg, Russia, 2011~\cite{KhistiD11}.  }
  \thanks{{A. Khisti's work was supported by an NSERC (Natural
      Sciences Engineering Research Council) Discovery Grant and an
      Ontario Early Researcher Award. S. Draper's work was supported
      by the National Science Foundation under CAREER grant CCF
      0844539 and by an NSERC Discovery Grant. } }}
\begin{document}
\maketitle
\thispagestyle{plain}
\pagestyle{plain}

\begin{abstract}
We consider the sequential transmission of a stream of messages over a
block-fading multi-input-multi-output (MIMO) channel.  A new message
arrives at the beginning of each coherence block, and the decoder is
required to output each message sequentially, after a delay of $T$
coherence blocks.  In the special case when $T=1$, the setup reduces
to the quasi-static fading channel. We establish the optimal
diversity-multiplexing tradeoff (DMT) in the high
signal-to-noise-ratio (SNR) regime, and show that it equals $T$ times
the DMT of the quasi-static channel.  The converse is based on
utilizing the delay constraint to amplify a local outage event
associated with a message, globally across all the coherence
blocks. This approach appears to be new. We propose two coding schemes
that achieve the optimal DMT. The first scheme involves interleaving
of messages, such that each message is transmitted across $T$
consecutive coherence blocks.  This scheme requires the knowledge of
the delay constraint at both the encoder and decoder. Our second
coding scheme involves a sequential tree code and is delay-universal
i.e., the knowledge of the decoding delay is not required by the
encoder. However, in this scheme we require the coherence block-length
to increase as $\log\mrm{({SNR})}$, in order to attain the optimal
DMT. Finally, we discuss the case when multiple messages arrive at
uniform intervals {\em within} each coherence period. Through a simple
example we exhibit the sub-optimality of interleaving, and propose
another scheme that achieves the optimal DMT.
\end{abstract}

\begin{IEEEkeywords}
Real-Time Streaming Communication, Diversity-Multiplexing Tradeoff,  Block-Fading,
Tree Codes, Interleaving.
\end{IEEEkeywords}

\section{Introduction}
Multimedia applications require real-time encoding of a source stream
and a sequential reconstruction of each source-packet by its playback
deadline.  Both the fundamental limits and optimal communication
techniques for such {\em streaming systems} can be very different from
classical communication systems.  In recent years there has been a
growing interest in characterizing information theoretic limits for
delay-constrained communication over wireless channels. When the
transmitter has channel state information (CSI), a notion of
delay-limited capacity can be defined~\cite{hanly-delay}. For slow
fading channels, the delay-limited capacity is achieved using channel
inversion at the transmitter~\cite{caire-delay}.  In absence of
transmitter CSI, an outage capacity can be
defined~\cite{shamai:98,tseViswanath:05}.  Unfortunately the
characterization of the outage capacity is in general a challenging
problem, even in point-to-point settings~\cite{verdu-myth}. A somewhat
coarse metric for studying the outage capacity is the
diversity-multiplexing tradeoff (DMT), first introduced
in~\cite{ZhengTse:03}.  The authors propose {\em diversity order} and
{\em multiplexing gain} as two fundamental metrics for communication
over a wireless channel, and establish a tradeoff between these for
quasi-static, multi-input-multi-output (MIMO) fading channels, in the
high signal-to-noise-ratio (SNR) regime.  A significant body of
literature on DMT now exists, see e.g.,~\cite[Chapter
  9]{tseViswanath:05}. Of particular interest in this work is the case
of $T$ independent parallel MIMO fading channels where the optimal DMT
equals $T$ times the DMT of the quasi-static MIMO fading channel, with
a suitably normalized multiplexing gain~\cite{ZhengTse:03}. Practical
code constructions for parallel fading channels have been proposed
in~\cite{yang,lu,mroueh}.  Interestingly, when the parallel channels
are correlated, the DMT analysis is far more intricate and only
special cases are known~\cite{coronel, bolsckei}.

In the present paper we consider the problem of real-time streaming
over a MIMO block-fading wireless channel.  We assume that the
transmitter observes a sequence of independent messages.  One message
arrives per coherence block, right at the start of the block.  The
input into the channel can depend on all the past messages, but not on
any future messages.  The decoder is required to output each message
with a maximum delay of $T$ coherence blocks.  When $T=1$, each
message sees only one fading realization and the setup reduces to the
quasi-static fading channel model.  In general, each message
experiences $T$ independent fading blocks; however it must be
multiplexed together with messages arriving in other coherence blocks.
We declare a message to be in outage if it cannot be decoded by its deadline. We
establish the optimal DMT in this streaming setting and show that it
equals $T$ times the DMT of the quasi-static MIMO fading
channel. The optimal DMT can be achieved by a simple
interleaving of messages across coherence blocks and transmitting
each message over $T$ parallel, MIMO fading channels. 
We also propose an
alternative tree-code that attains the optimal DMT.  In this scheme the delay constraint only needs to be revealed to
the decoder, and not to the encoder, and thus it is suitable for
applications where a common source stream must be transmitted to
multiple receivers with different decoding delays.
However in order to
achieve the optimal DMT, the coherence block length for tree-codes must
increase with $\log\left(\mrm{SNR}\right)$ and thus this scheme appears to require
long coherence periods.

Tree codes for streaming communication over a discrete
memoryless channels (DMC) have been studied previously
in~\cite{Sahai-1a,Sahai-2,Sahai-5,DraperStreaming,sukhavasi2011linear}.  These works,
however, consider maximum likelihood and universal decoders. In
contrast, our analysis of the tree code is based on a very different
outage analysis paired with a decision directed decoder. We express
our error probability as a sum of two terms --- one term decreases
exponentially in $\log(\mathrm{SNR})$ while the other decreases
exponentially in the coherence block-length. By suitably balancing the
two exponents we establish that our proposed scheme attains the
optimal DMT.  Another recent work, reference~\cite{cocco}, studies a
related setup when the transmitter sequentially observes a stream of
messages, but assumes that all the messages have a common deadline. A
variety of coding techniques such as adaptive joint encoding,
memoryless transmission, time sharing and superposition transmission
are compared in different delay and SNR regimes. Another followup
work~\cite{cocco2} considers the case when all the messages are
available to the encoder, but have different playback deadline at the
receiver. In contrast our proposed setup requires that each incoming
message must be reconstructed after a fixed decoding deadline, which
is relevant in applications such as real-time voice and video
streaming.  Finally in yet another related
work~\cite{Kittipiyakul,Kittipiyakul2}, the authors study the
transmission of bursty and delay-sensitive data source over a
constant-rate MIMO fading channel and establish an optimal operating
point on the DMT that balances the channel outage and
{\emph{delay-violation}} probabilities. However the results are valid
only for asymptotically large decoding delays. Furthermore it appears
that the coding techniques considered in these works do not retransmit
the same information bits across multiple coherence blocks, a key idea
exploited in the present paper.

In the rest of the paper we describe the system model in Section~\ref{sec:model}, and the main result, that characterizes the streaming-DMT in Section~\ref{sec:result}.  We provide the proof of the converse in Section~\ref{sec:Converse}. The coding schemes based on interleaving and tree-codes are presented in Section~\ref{sec:Coding}. We discuss extension to the case of multiple messages in Section~\ref{sec:multiple} and provide conclusions in Section~\ref{sec.conclusion}.

Throughout the paper we will use the following notation. Upper case bold-font will be reserved for matrices (e.g.,~$\bH$) whereas lower case bold-font (e.g.,~$\bx$) will be used for vectors. Scalar symbols will be denoted using lower case non-bold fonts. We will use the sans serif font for random variables e.g., $\rvx$.  A sequence of symbols $x_i, x_{i+1},\ldots, x_j$ will be denoted using the notation $x_i^j$. Throughout the paper the symbol $\doteq$ will be reserved to denote equality in the exponential sense i.e., we express, $f(\rho) \doteq \rho^b$, if $\lim_{\rho\rightarrow\infty}\frac{\log f(\rho)}{\log \rho} = b$ holds. The symbols $\stackrel{\cdot}{\le}$ and $\stackrel{\cdot}{\ge}$ will be defined in a similar fashion.

\section{Model}
\label{sec:model}

We consider an independent identically
distributed (i.i.d.) block fading channel model with a coherence
period of $M$:
\begin{equation}
\rvbY_k = \rvbH_k \cdot \bX_k +
\rvbZ_k,\label{eq:chan_Model}
\end{equation} 
where $k = 0,1,\ldots$, denotes the index of the coherence block of
the fading channel.  
The matrix $\rvbH_k \in \mathbb{C}^{\Nr \times \Nt}$
denotes the channel transfer matrix in coherence period $k$.
We assume that the transmitter has $\Nt$ transmit antennas
and the receiver has $\Nr$ receive antennas. 
 $${\rvbX_k = \left[\rvbx_k(1)~|~\ldots ~|~\rvbx_k(M)\right]}\in
\mbC^{N_t \times M}$$ is a matrix whose $j$-th column, $\rvbx_k(j)$,
denotes the vector transmitted in time-slot $j$ in the coherence block
$k$ and similarly $\rvbY_k \in \mbC^{N_r \times M}$ is a matrix whose
$j$-th column, $\rvby_k(j)$ denotes the vectors received in time-slot
$j$ in block $k$. The additive noise matrix is $\rvbZ_k \in \mbC^{N_r
  \times M}$. Thus~\eqref{eq:chan_Model} can also be expressed as,
\begin{equation}
\rvby_k(j) = \rvbH_k \cdot \rvbx_k(j) + \rvbz_k(j), \qquad j=1,\ldots,
M.\label{eq:chan_Model_2}
\end{equation} 
We assume that all entries of $\rvbH_k$ are sampled independently from
the complex Gaussian distribution\footnote{While we only focus on the
  Rayleigh channel model, our results easily extend to other channel
  models.}  with zero-mean and unit-variance i.e., $\CN(0,1)$.  The
channel remains constant during each coherence block and is sampled
independently across blocks.  All entries of the additive noise matrix
$\rvbZ_k$ are also sampled i.i.d.\ $\CN(0,1)$.  Finally the realization
of the channel matrices $\rvbH_k$ is revealed to the decoder, but not
to the encoder.

We assume an average (short-term) power constraint
${E[\sum_{i=1}^M||\rvbx_k(i)||^2] \le M \rho}$.  Note that $\rho$
denotes the transmit power which will serve as our SNR parameter.  
A delay-constrained streaming code is defined
as follows:
\begin{defn}[Streaming Code]
\label{def:Rate}
A rate $R$ streaming code with delay $T$, $\cC(R,T)$, consists of
\begin{itemize}
\item[1.] A sequence of messages $\{\rvw_k\}_{k\ge 0}$ each
  distributed uniformly over the set ${\cI_M=\{1,2,\ldots, 2^{MR}\}}$.
\item[2.] A sequence of encoding functions ${\cF_k : \cI_M^{k+1}
  \rightarrow {\mathbb{C}}^{\Nt \times M}}$,
\begin{equation}
\rvbX_k = \cF_k(\rvw_0,\ldots, \rvw_k),\quad
k=0,1,\ldots\label{eq:enc}
\end{equation} 
that maps the input message sequence to the channel input matrix ${\rvbX_k} \in {\mathbb{C}}^{\Nt \times M}$.
\item[3.] A sequence of decoding functions ${\cG_k :
  {\mathbb{C}}^{M(k+T)} \rightarrow \cI_M}$ that outputs message
  estimate $\hat{\rvw}_k$ based on the first $k+T$ observations, i.e.,
\begin{equation}
\hat{\rvw}_k = \cG_k(\rvbY_0,\ldots, \rvbY_{k+T-1}),\quad
k=0,1,\ldots\label{eq:dec}
\end{equation}
\end{itemize}
\end{defn}

Fig.~\ref{fig:model} illustrates such a setup for the case when
$T=2$. One message $\rvw_k$ arrives at the start of each coherence
block. The codeword transmitted in block $k$, $\rvbX_k(\rvw_0^k)$ can
depend on all the past messages, but not on any future messages. Since
$T=2$, the receiver must decode message $\rvw_k$ at the end of
coherence block $k+1$ i.e., $\hat{\rvw_k} = \cG_k(\rvbY_0,\ldots,
\rvbY_{k+1})$.

\begin{figure*}
\begin{center}
\includegraphics[scale=0.8]{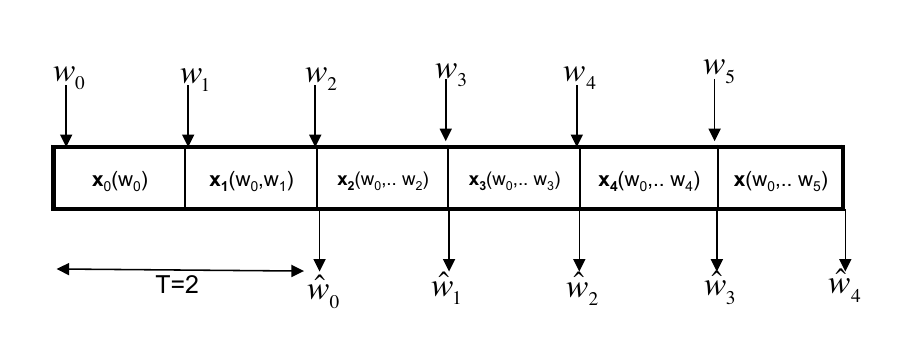}
\caption{Proposed Streaming Model. One new message arrives at the start of each coherence
block. The message stream is encoded sequentially  and each message
needs to be output at the receiver after $T$ coherence blocks. In the above figure $T=2$.  }
\label{fig:model}
\end{center}
\end{figure*}

{We now define the diversity-multiplexing tradeoff (DMT) associated
  with the streaming code $\cC(R,T)$.  The error probability for the
  $k$-th message is $\Pr[\rvw_k \neq \hvw_k]$ where $\hat{\rvw}_k$ is
  the decoder output~\eqref{eq:dec} and the error probability is
  averaged over the random channel gains.  In the DMT both error
  probability and rate are studied as a function of the SNR parameter
  $\rho$.  Let $\Pe(\rho) = \sup_{k\ge 0} \Pr[\hat{\rvw}_k\neq\rvw_k]$
  denote the worst-case error probability of a rate-$R(\rho)$ code. A
  DMT tradeoff~\cite{ZhengTse:03} of $(r,d)$ is said to be achievable
  with delay $T$ if there exists a sequence of codebooks
  $\cC(R(\rho),T)$ achieving $\Pe(\rho)$ such that
\begin{equation}
r = \lim_{\rho \rightarrow \infty} \frac{R(\rho)}{\log\rho}, \quad d =
\lim_{\rho\rightarrow\infty}\frac{-\log \Pe(\rho)}{\log \rho}.
\end{equation}
 Of interest, is the optimal diversity-multiplexing tradeoff, denoted
 by $d_T(r)$.} \footnote{We caution the reader that in the above
   discussion $\Pe(\rho)$ is {\em not} the maximum error probability
   with respect to a single realization of the fading state
   sequence. This later quantity is clearly $1$ as in any sufficiently
   long realization there will eventually be a block fade that induces
   an outage. In our definition we fix an index, $k$, and find the
   error probability $\Pr[\hat{\rvw}_k \neq \rvw_k]$ averaged over the
   channel gains. We subsequently search for the index $k$ with the
   maximum error probability. E.g., for time-invariant coding schemes
   operating in the steady-state regime, due to symmetry,
   $\Pr[\hat{\rvw}_k \neq \rvw_k]$ will not depend on $k$. }

One class of wireless systems that motivates this model is
frequency-hopping orthogonal frequency-division multiple access
(OFDMA) systems.  Here the frequency bands (or ``sub-channel'')
allocated to a user are changed at regular intervals and in a
randomized fashion.  In frequency-selective channels the randomized
sub-channel allocation means that the channels the the user
experiences pre- and post-hop are approximately independent, as long
as the expected recurrence time of a particular sub-channel exceeds
the coherence time of the channel.  In this setting we can
characterize the playback deadline in terms of the number of hops $T$
until the message $\rvw_k$ must be estimated by the receiver.  This
can be translated into a number of channel uses, $TM$, where $M$
denotes the number of symbols transmitted in each hop, and hence into
time.  In practical systems the value of $M$ may be fixed and thus not
under the control of the application.

\section{Main Result}
\label{sec:result}
The optimal tradeoff between diversity and multiplexing (DMT) for the
quasi-static fading channel was characterized
in~\cite{ZhengTse:03}. We reproduce the result below for the
convenience of the reader.
\begin{thm}{(Zheng and Tse, \!\cite{ZhengTse:03})}
For the quasi-static fading channel
\begin{align}
\rvby(t) = \rvbH\cdot\rvbx(t) + \rvbz(t)
\end{align}
where the entries of ${\rvbH \in {\mathbb C}^{\Nr \times \Nt}}$ are
sampled i.i.d.\ $\CN(0,1),$ the optimal DMT tradeoff $d_1(r)$ is a
piecewise linear function connecting the points $(k, d_1(k))$ for $k =
0,1,\ldots, \min(\Nr, \Nt)$ where
\begin{equation} 
d_1(k) = (\Nr-k)(\Nt-k). \label{eq:quasi-static_DMT}
\end{equation} 
\hfill$\Box$
\label{thm:DMT_Zheng}
\end{thm}

In our analysis the following  generalization of the quasi-static DMT
to $L$ parallel channels, see~\cite[Corollary 8]{ZhengTse:03}\cite{yang,lu} is useful.

\begin{corol}
Consider a collection of $L$ parallel 
quasi-static fading channel 
\begin{align}
\rvby_l(t) = \rvbH_l\cdot\rvbx_l(t) + \rvbz_l(t), \quad l=1,\ldots, L
\end{align}
where the entries of ${\rvbH_l\in {\mathbb C}^{\Nr \times \Nt}}$ are all sampled i.i.d. $\CN(0,1)$. The 
DMT tradeoff is given by $d^\mrm{\parallel}_L(r) = L\cdot d_1\left(\frac{r}{L}\right)$
for any $r \in (0, L\min(\Nr, \Nt))$.

\hfill$\Box$
\label{corol:parallel}
\end{corol}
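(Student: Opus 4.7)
The plan is to reduce Corollary~\ref{corol:parallel} to a direct application of the Zheng--Tse outage analysis underlying Theorem~\ref{thm:DMT_Zheng}, now carried out on the block-diagonal channel formed by concatenating the $L$ independent parallel links.

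First, for achievability I would use an i.i.d.\ Gaussian codebook of rate $r\log\rho$ drawn over the joint input $(\rvbX_1,\ldots,\rvbX_L)$; by the standard typical-set (or sphere-decoding) argument of \cite{ZhengTse:03}, the error probability is upper bounded, in the SNR-exponent sense, by the outage probability of the parallel channel. For the converse, since the transmitter has no CSI while the receiver knows all $\rvbH_l$, Fano's inequality gives a matching outage lower bound on the error probability of any sequence of codes with multiplexing gain $r$.

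The heart of the argument is computing the outage exponent. Writing $\lambda_{l,i}=\rho^{-\alpha_{l,i}}$ for the ordered eigenvalues of $\rvbH_l^\dagger\rvbH_l$, the joint density of the $\alpha_{l,i}$ is a product over $l$ of the Wishart expressions analyzed in~\cite{ZhengTse:03}, with SNR-exponent $\sum_{l,i}(2i-1+|\Nr-\Nt|)\alpha_{l,i}$. The mutual information of the parallel channel satisfies $I(\rho)\doteq \sum_{l,i}(1-\alpha_{l,i})^+\log\rho$, so the Laplace principle yields $P_\mathrm{out}\doteq\rho^{-d_\mathrm{out}}$ with
\[
d_\mathrm{out}=\min_{\alpha\ge 0:\;\sum_{l,i}(1-\alpha_{l,i})^+<r}\;\sum_{l,i}(2i-1+|\Nr-\Nt|)\,\alpha_{l,i}.
\]

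The final step separates the minimization across the $L$ channels by introducing per-channel effective multiplexing gains $r_l=\sum_i(1-\alpha_{l,i})^+$; for each fixed $r_l$, the inner minimization over $\alpha_{l,\cdot}$ equals $d_1(r_l)$ by Theorem~\ref{thm:DMT_Zheng}. Hence $d_\mathrm{out}=\min_{r_l\ge 0,\,\sum_l r_l\le r}\sum_{l=1}^L d_1(r_l)$. A short computation from the vertex data of Theorem~\ref{thm:DMT_Zheng} shows the slopes of $d_1$ are $-(\Nr+\Nt-2k-1)$ for $k=0,\ldots,\min(\Nr,\Nt)-1$ and hence non-decreasing, so $d_1$ is convex; Jensen's inequality then forces the optimum to occur at the symmetric allocation $r_l=r/L$, yielding $d^\mathrm{\parallel}_L(r)=L\cdot d_1(r/L)$. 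The only real obstacle is bookkeeping: the Zheng--Tse machinery applies essentially verbatim by independence across $l$, and the per-channel decomposition together with convexity of $d_1$ handles the final optimization cleanly.
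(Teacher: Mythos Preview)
Your proposal is correct and follows essentially the same route as the paper: both reduce to the outage exponent, arrive at the optimization $\min_{r_l\ge 0,\;\sum_l r_l\le r}\sum_{l=1}^L d_1(r_l)$, and solve it by convexity and symmetry of $d_1(\cdot)$ to obtain $r_l=r/L$. The only cosmetic difference is that the paper stays at the mutual-information level---writing $\Pr\bigl(\sum_l C_l(\rho)\le s\log\rho\bigr)$, taking a union over rate-splits $(r_1,\ldots,r_L)$, and invoking the single-link DMT $\Pr(C_l\le r_l\log\rho)\doteq\rho^{-d_1(r_l)}$ as a black box---whereas you descend to the eigenvalue exponents $\alpha_{l,i}$ and re-run the Laplace/large-deviations argument from~\cite{ZhengTse:03} directly on the block-diagonal channel before re-aggregating via $r_l=\sum_i(1-\alpha_{l,i})^+$.
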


Our main result establishes the optimal DMT for a block fading channel
model with a delay constraint of $T$ coherence blocks.

\begin{thm}
The optimal DMT tradeoff for a streaming code
(cf.~Definition~\ref{def:Rate}) with a delay of $T$ coherence blocks
is given by $d_T(r) = T\cdot d_1(r)$, where $d_1(r)$ is the optimal
DMT of the underlying quasi-static fading
channel~\eqref{eq:quasi-static_DMT}.

\hfill$\Box$
\label{thm:streaming_DMT}
\end{thm}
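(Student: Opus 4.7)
The plan is to split the proof into achievability ($d_T(r) \ge T\cdot d_1(r)$) and converse ($d_T(r) \le T\cdot d_1(r)$), with the achievability following from an interleaving construction and the converse requiring a new outage-amplification argument.

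For achievability, I would partition each coherence block of $M$ channel uses into $T$ sub-blocks of $M/T$ uses each, and transmit message $\rvw_k$ using a single Gaussian codebook of rate $R$ whose $j$-th portion occupies the $j$-th sub-block of coherence block $k+j-1$, for $j=1,\ldots,T$. This respects causality (the $j$-th portion depends only on messages available by block $k+j-1$) and the delay constraint (the last portion arrives in block $k+T-1$). From $\rvw_k$'s viewpoint the effective channel is $L=T$ independent parallel quasi-static MIMO channels with $M/T$ uses per branch, i.e., $M/T$ ``vector channel uses'' carrying $MR = Mr\log\rho$ bits in aggregate, hence a rate of $TR = Tr\log\rho$ bits per vector channel use. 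Applying Corollary~\ref{corol:parallel} with $L = T$ and parallel multiplexing gain $Tr$ yields diversity $T\cdot d_1(Tr/T) = T\cdot d_1(r)$.

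For the converse, the key step is to identify an outage event of probability $\rho^{-T d_1(r)}$ that forces some message to be in error. For a window $W=[k,k+T-1]$ of $T$ consecutive coherence blocks, let $E$ denote the event that each $\rvbH_j$, $j\in W$, individually fails the single-block outage test at rate $R$, i.e., $\log\det(I + (\rho/\Nt)\rvbH_j\rvbH_j^\dagger) < R$ for all $j\in W$. By independence of channel realizations and Theorem~\ref{thm:DMT_Zheng}, $\Pr(E)\doteq \rho^{-T d_1(r)}$. The amplification claim is that conditional on $E$, at least one of $\rvw_k,\rvw_{k+1},\ldots,\rvw_{k+T-1}$ must be in error with non-vanishing probability for any streaming code. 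The heuristic is a conservation argument: on $E$, the $T$ blocks of $W$ carry at most $TMR$ bits in aggregate, yet they must simultaneously deliver $MR$ bits to each of $T$ distinct streaming messages whose decoding windows overlap $W$. This would yield $P_e = \sup_k p_k \ge (1/T)\sum_{m=0}^{T-1} p_{k+m} \gtrsim \Pr(E) \doteq \rho^{-T d_1(r)}$, hence $d_T(r)\le T\cdot d_1(r)$.

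I expect the main obstacle to lie in formalizing the amplification step. The standard genie-aided bound, in which all other messages are revealed to $\rvw_k$'s decoder and its $T$-block window is analyzed as a parallel channel at rate $r$, only yields $d_T(r)\le T\cdot d_1(r/T)$ via Corollary~\ref{corol:parallel} --- strictly weaker than the theorem's claim. The looseness arises because the genie lets $\rvw_k$ monopolize the full per-use capacity of each block in its window, rather than forcing it to share that capacity with the other streaming messages that are simultaneously being decoded from those blocks. The outage amplification must avoid this genie and instead exploit the fact that the same $T$ blocks serve $T$ messages with staggered deadlines. Concretely, I expect to apply Fano's inequality jointly to $T$ consecutive messages and bound the aggregate mutual information $I(\rvw_k,\ldots,\rvw_{k+T-1};\rvbY_k,\ldots,\rvbY_{k+T-1}\mid \rvw_0,\ldots,\rvw_{k-1},\rvbH)$ by $\sum_{j\in W} M\log\det(I+(\rho/\Nt)\rvbH_j\rvbH_j^\dagger)$, so that the factor $T$ appears on the rate side rather than on the capacity side. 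Making this step tight enough to recover $T\cdot d_1(r)$ instead of $T\cdot d_1(r/T)$ is where I anticipate the chief technical difficulty.
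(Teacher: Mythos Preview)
Your achievability via interleaving is correct and coincides with the paper's first coding scheme.

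Your converse plan, however, has a genuine gap: the $T$-block window is too short. The claim that ``conditional on $E$, at least one of $\rvw_k,\ldots,\rvw_{k+T-1}$ must be in error'' is false as stated. Only $\rvw_k$ has its entire decoding window inside $W=[k,k+T-1]$; message $\rvw_{k+j}$ is decoded from $\rvbY_{0}^{k+j+T-1}$, and blocks $k+T,\ldots,k+j+T-1$ lie outside $W$ and need not be in outage. Consequently Fano's inequality for $\rvw_{k+j}$ cannot be controlled by the capacity of $W$ alone, and the aggregate mutual information you propose, $I(\rvw_k,\ldots,\rvw_{k+T-1};\rvbY_k^{k+T-1}\mid\cdot)$, does not upper bound the information actually available to those decoders. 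If you instead enlarge the outage event to cover all blocks used by $\rvw_k,\ldots,\rvw_{k+T-1}$, its probability drops to $\rho^{-(2T-1)d_1(r)}$ and you lose the bound. With a $T$-block window the only message you can legitimately trap is $\rvw_k$, and that recovers exactly the genie bound $Td_1(r/T)$ you already identified as too weak.

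The paper's fix is to take a window of length $N$ with $N\gg T$ and look at the $N-T$ messages $\rvw_0,\ldots,\rvw_{N-T-1}$ whose decoding windows are fully contained in $[0,N-1]$. The key observation is that for each such $\rvw_k$, Fano conditioned on $\{\rvbH_k^{k+T-1}\in\cH_\del^T\}$ already yields the factor $P_\del^T\doteq\rho^{-Td_1(r-\del)}$; and because the remaining channel gains are independent of $(\rvw_0^k,\rvbY_k^{k+T-1},\rvbH_k^{k+T-1})$, one can extend the conditioning to $\{\rvbH_0^{N-1}\in\cH_\del^N\}$ \emph{without changing the mutual information and without paying any extra probability}. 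Now all $N-T$ Fano bounds share the same conditioning event, so the chain rule collapses their sum into $I(\rvw_0^{N-T-1};\rvbY_0^{N-1}\mid\rvbH_0^{N-1}\in\cH_\del^N)\le NM(r-\del)\log\rho$. The aggregate rate is $(N-T)Mr\log\rho$, and for $N>Tr/\del$ this exceeds the aggregate capacity, forcing $\max_k\Pr(\cE_k)\ge c\,P_\del^T$ with $c>0$. The amplification, then, is not ``$T$ blocks in outage forces one of $T$ messages to fail'' but rather ``each message's own $T$-block outage, after a free extension of conditioning, lets $N-T$ Fano inequalities share a single $N$-block capacity budget.''
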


Comparing the results of Theorem~\ref{thm:streaming_DMT} with that of
Corollary~\ref{corol:parallel}, we observe that the DMT of a streaming
source under a delay constraint of $T$ coherence blocks is identical
to the DMT of a system with $T$ independent and parallel MIMO channels
if the rate of the latter system is suitably normalized. Indeed, one
of our achievability schemes exploits this connection. We show that
the DMT can be achieved by interleaving messages in a suitable manner
to reduce the system to a parallel channel setup.  However, the
converse does not follow from earlier results since the length-$T$
playback deadlines of successive messages are only partially
overlapping.  We present a new approach that addresses the overlapping
character of the playback deadlines.  The technique is specific to the
streaming setup and appears novel.

\begin{remark}
In our system model, we assumed that the coherence block-length $M$
can be arbitrarily large. It is well known~\cite{ZhengTse:03} that for
quasi-static channel, as well as its extension to $L$ parallel
channel~\cite{lu}, the DMT in Theorem~\ref{thm:DMT_Zheng} holds for
any coherence block-length $M\ge \Nr + \Nt-1$. In a similar fashion
our result in Theorem~\ref{thm:streaming_DMT} holds for any $M \ge \Nr
+\Nt-1$. In particular the converse in Section~\ref{sec:Converse}
holds for any $M$. The interleaved coding scheme in
Section~\ref{sec:Coding-Interleaving} reduces the setup to parallel
channels and applies to any $M \ge \Nr +\Nt-1$. However this scheme
requires the knowledge of $T$ at both the encoder and decoder. Our
second coding scheme, which is based on a tree code and only requires
the knowledge of $T$ at the decoder, does require $M$ to be
sufficiently large. In particular our analysis for this scheme
requires that $M$ must increase as $\log\mrm{SNR}$ to achieve the
optimal DMT.
\end{remark}

\section{Converse}  
\label{sec:Converse}

In this section we establish a lower bound on the error probability
for any  streaming code in
Definition~\ref{def:Rate}.  We thereby upper bound the achievable DMT.
In particular we show that
$$\Pr[\mbox{error}] \stackrel{.}{\ge} \rho^{-Td_1(r)}$$ where $d_1(r)$
is the DMT tradeoff associated with a single-link MIMO channel.  For
the purpose of establishing a contradiction, we will assume that a DMT
{\em better} than $d_T(r)$ is achievable, say $T \cdot d_1(r -
\delta)$.  We show that for any $\delta > 0$ a contradiction will
build up if we operate the system over a sufficiently large number of
blocks $N$.  The smaller $\delta$ is the longer it takes the
contradiction to build up.


The steps in our proof are the following, illustrated in
Fig.~\ref{fig.converseIllustration}.
\begin{enumerate}
\item FANO: Apply Fano's inequality to each $\rvw_k$ individually.  A
  decision on $\rvw_k$ must be made at time $\Tk = k + T - 1$.
\item GENIE: Condition the decoding of $\rvw_k$ on all previous
  messages $\rvw_0^{k-1}$.  This can only help the decoder (thereby
  increasing the DMT) because the decoder knows exactly the value of
  all earlier messages.  This step can be thought of as a
  genie-helper.
\item SUFFIX OUTAGE: Next we condition on the event that the {\em
  suffix} of the codeword is in outage.  By suffix we mean the symbols
  transmitted in blocks $k, k+1, \ldots, \Tk = T + k-1$.  We bound
  this event using the standard DMT analysis.
\item COMBINE EVENTS: Finally, using standard information
  manipulations, we combine events up to message $\rvw_{N-T+1}$ for
  some large $N$ (to be determined).
\item CONTRADICTION: Finally, using the statistical description of the
  channel law, we find that for any $\delta > 0$ we can identify a
  finite $N$ sufficiently large such that a contradiction arises and
  this demonstrates that a DMT of $T \cdot d_1(r-\delta)$ is not
  achievable.
\end{enumerate}

\begin{figure*}
\centerline{\includegraphics[scale=0.55]{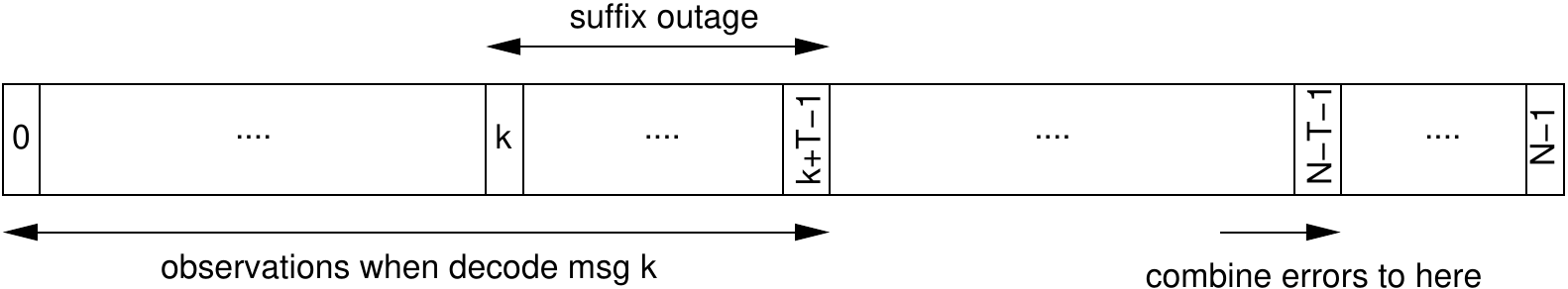}}
\caption{ In the converse, we consider a total of $N$ coherence blocks
  and $N-T$ messages. For each message $k$, we consider the event that
  the coherence blocks $k,\ldots, k+T-1$ are in outage and lower bound
  the error probability in~\eqref{eq.condFanos}. We then combine the
  error probabilities associated with all the messages to obtain a
  lower bound on the maximum error. }
\label{fig.converseIllustration}
\end{figure*}


Following the approach outlined above, in our first step we apply
Fano's inequality~\cite[Chapter~2]{coverThomas} to lower bound the
error probability associated with message $\rvw_k$.  To do this, we
define $\cE_k$ to be the event that $\hat{\rvw}_k \neq \rvw_k$ and
note that $\Pr[\mbox{error}] = \sup_{k \ge 0}\Pr[\cE_k]$.  We start by
indexing the error events pointwise in the possible channel
realizations. Strictly speaking, the summation over the channel gains
must be an integral, since the channel gains are continuous valued. We
however use summations, so that the expressions are easier to
follow. All the steps in this section easily follow when the
summations are replaced by corresponding multi-integrals.
\begin{equation}
\Pr[\cE_k]  = \sum_{\bH_0^\Tk} \Pr[\rvbH_0^{\Tk} = \bH_0^{\Tk}] \Pr[\cE_k| \rvbH_0^{\Tk} = \bH_0^{\Tk}]. \label{eq.pointwise}
\end{equation}
Message $\rvw_k$ needs to be decoded at time $\Tk = k + T - 1$.  The
observations accumulated to that time are the channel outputs
$\rvbY_0^{\Tk}$.  Recognizing that the log-cardinality of the message
set from which $\rvw_k$ is chosen is $Mr \log \rho$, we apply Fano's
inequality to each channel realization
to get
{\allowdisplaybreaks{\begin{align*}
&\Pr[\cE_k| \rvbH_0^{\Tk} = \bH_0^{\Tk}]  \ge \frac{- 1  +
  H(\rvw_k | \rvbY_0^{\Tk}, \rvbH_0^{\Tk} = \bH_0^{\Tk}) }{M r \log
  \rho} \\ &= \frac{- 1 + H(\rvw_k) - H(\rvw_k) +
  H(\rvw_k | \rvbY_0^{\Tk}, \rvbH_0^{\Tk}= \bH_0^{\Tk}) }{M r \log \rho}\\ & \ge 1
- \frac{1}{Mr\log\rho} -\frac{H(\rvw_k| \rvw_0^{k-1},
  \rvbH_0^\Tk= \bH_0^{\Tk})}{Mr\log\rho} +\notag\\ &\qquad\frac{H(\rvw_k| \rvbY_0^{\Tk},
  \rvbH_0^{\Tk}= \bH_0^{\Tk})}{Mr\log\rho},
\end{align*}}}
where the latter inequality follows since $H(\rvw_k) =
H(\rvw_k|\rvw_0^{k-1}, \rvbH_0^{\Tk}= \bH_0^{\Tk})$.

The second step in our proof is the genie-aided step.  We condition
the last term in the above on all previous messages yielding the further lower bound:
{\allowdisplaybreaks{\begin{align}
&\Pr[\cE_k | \rvbH_0^\Tk = \bH_0^{\Tk}] \\& \ge 1 -
  \frac{1}{Mr\log\rho} 
-\frac{H(\rvw_k| \rvw_0^{k-1}, \rvbH_0^\Tk= \bH_0^{\Tk})}{Mr\log\rho}\notag\\
&\qquad+\frac{H(\rvw_k| \rvw_0^{k-1}, \rvbY_0^{\Tk}, \rvbH_0^{\Tk}= \bH_0^{\Tk})}{Mr\log\rho}\notag\\
& = 1 -
  \frac{1}{Mr\log\rho} - \frac{I(\rvw_k;  \rvbY_0^{\Tk}| \rvw_0^{k-1}, \rvbH_0^{\Tk}= \bH_0^{\Tk})}{Mr\log\rho}\\
& = 1 -
  \frac{1}{Mr\log\rho} - \frac{I(\rvw_k;  \rvbY_k^{\Tk}| \rvw_0^{k-1}, \rvbH_k^{\Tk}= \bH_k^{\Tk})}{Mr\log\rho}
\label{eq.fanos}
\end{align}}}
To get the last equality we note the following conditions.  
First, since
message and channel realizations are independent $H(\rvw_k|
\rvw_0^{k-1}, \rvbH_0^{\Tk} = \bH_0^{\Tk}) = H(\rvw_k|
\rvw_0^{k-1}, \rvbH_k^{\Tk} = \bH_k^{\Tk})$.
Second, we note the following Markov relationship: $\rvw_k
\leftrightarrow \rvw_0^{k-1}, \rvbY_k^\Tk, \rvbH_k^\Tk \leftrightarrow
\rvbY_0^{k-1}, \rvbH_0^{k-1}$.  This relation holds due to the causal
nature of the encoder and the i.i.d.\ nature of the channel.  In
particular, note that causal encoding means that the channel inputs
$\rvbX_k^\Tk$ are a function of $\rvw_k$ and $\rvw_0^{k-1}$ while past
channel inputs are a function only of $\rvw_0^{k-1}$.  Thus, since the
channel is memoryless the past channel output and state information
$(\rvbY_0^{k-1}, \rvbH_0^{k-1})$ provides no information about
$\rvw_k$ that $(\rvw_0^{k-1}, \rvbY_k^\Tk, \rvbH_k^\Tk)$ does not
provide.

In the third step we condition on the suffix being in outage.  In
particular, define the single-block outage set
\begin{align}
\mathcal{H}_\delta = \big\{ \bH : I(\rvbx; \rvby | \rvbH = \bH) \leq (r - \delta) \log \rho \big\} \label{eq.defOutageSet}
\end{align}
By the classic outage analysis~\cite{ozarowShamaiWyner:94,
  telatarMimo}, which underlies the DMT of
Theorem~\ref{thm:DMT_Zheng}, we know that
\begin{equation}\label{eq:P_del}
P_\del = \Pr[\rvbH \in \cH_\del] \doteq \rho^{-d_1(r-\del)}
\end{equation} 
where $d_1(\cdot)$ is the DMT specified in Theorem~\ref{thm:DMT_Zheng}
and the exponential equality is at high SNR.  By ``suffix outage'' we
mean that $\rvbH_j$ is in outage for {\em every} block $j = k, \ldots,
\Tk$, in other words $\cap_{j=k}^{\Tk} (\rvbH_j \in \cH_\del)$.  Using
$\cH_\del^T$ to denote the $T-$fold Cartesian product of the set
$\cH_\del$, and recalling that the channel gains are sampled in an
i.i.d.\ fashion across blocks, we have
\begin{equation}
\Pr\left[\cap_{j=k}^{\Tk} (\rvbH_j \in \cH_\del)\right] = 
\Pr\left[\rvbH_k^{\Tk}\in \cH_\del^T\right] = (P_\del)^T \doteq \rho^{-T d_1(r-\del)}. \label{eq:probH}
\end{equation} 

We next incorporate the effect of outage into our lower bound.  In Appendix~\ref{app:outage}
we show that
\begin{multline}
\Pr[\cE_k]  \ge \Pr[\rvbH_k^\Tk \in \cH_\del^T] \times \\ \left(1 -
  \frac{1}{Mr\log\rho} - \frac{I(\rvw_k;  \rvbY_k^{\Tk}| \rvw_0^{k-1}, \rvbH_k^{\Tk}, \rvbH_k^\Tk \in \cH_\del^T)}{Mr\log\rho} \right).\label{eq.condFanos}
\end{multline}
where the expression $\rvbH_k^\Tk \in \cH_\del^T$ in the conditioning indicates that the sequence $\rvbH_k^\Tk$ belongs to the outage set $\cH_\del^T$. 

Since all the terms in the mutual information expression in~\eqref{eq.condFanos} are independent of the channel gains: $\{\rvbH_0^{k-1},\rvbH_{\Tk+1}^{N-1}\}$ we can express
\begin{align}
&I(\rvw_k;  \rvbY_k^{\Tk}| \rvw_0^{k-1}, \rvbH_k^{\Tk}, \rvbH_k^\Tk \in \cH_\del^T)\notag \\&= I(\rvw_k;  \rvbY_k^{\Tk}| \rvw_0^{k-1}, \rvbH_0^{N-1}, \rvbH_0^{N-1} \in \cH_\del^N ) \label{eq.mutInformA}\\
&\le I(\rvw_k;  \rvbY_0^{N-1}| \rvw_0^{k-1}, \rvbH_0^{N-1}, \rvbH_0^{N-1} \in \cH_\del^N). \label{eq.mutInformB}
\end{align}
where the last step follows from the fact that the mutual information is non-negative.
We will see that the final loosening in~(\ref{eq.mutInformB}) doesn't
weaken our bound for two reasons.  The first is because we study the
max error probability.  The second is because the information about
the messages embedded in later channel uses, i.e.,
$\rvbY_{\Tk+1}^{N-1}$, must be used to decrease the entropy of later
messages.  It cannot be focused exclusively on reducing the
uncertainty of $\rvw_k$ without detrimental effects on the ability to
estimate later messages.  This coupling of errors across time is what we call the
{\emph{outage amplification}} effect.

In the fourth step we combine events.  Substituting~(\ref{eq:probH})
and (\ref{eq.mutInformB})
into~(\ref{eq.condFanos}) we find
\begin{multline}
\Pr[\cE_k] \geq P_\del^T \times \\ \left[1 -
  \frac{1}{Mr\log\rho} - \frac{I(\rvw_k;  \rvbY_0^{N-1}| \rvw_0^{k-1}, \rvbH_0^{N-1}, \rvbH_0^{N-1} \in \cH_\del^N)}{Mr\log\rho}\right].
\end{multline}
And, since the max error is at least as large as the average error, we can arrive at 
\begin{multline} \max_{0 \le k\le N-T-1} \Pr[\cE_k]  \ge
P_\del^T\times \\\Bigg[1  -
  \frac{1}{Mr\log\rho} -
  \frac{I(\rvbX_0^{N-1};\rvbY_0^{N-1}|\rvbH_0^{N-1}, \rvbH_0^{N-1}\in
    \cH_\del^N)}{(N-T)Mr\log\rho}\Bigg]\label{eq:DP}
\end{multline}
as shown in the steps between~\eqref{eq:DP0}--\eqref{eq:DP2}.
\begin{figure*}
\begin{align} &\max_{0 \le k\le N-T-1} \Pr[\cE_k]  \ge
  \frac{1}{N-T}\sum_{k=0}^{N-T-1}\Pr[\cE_k]\label{eq:DP0}\\ 
&\ge P_\del^T\Bigg[1  -
  \frac{1}{Mr\log\rho} \!-\! \frac{\sum_{k=0}^{N-T-1}\!
      I(\!\rvw_k;\rvbY_0^{N-1} |\rvw_0^{k-1},\rvbH_0^{N-1},\rvbH_0^{N-1}\in
      \cH_\del^N\!)}{(N-T)Mr\log\rho}\Bigg]\notag\\
&=P_\del^T\Bigg[1  -
  \frac{1}{Mr\log\rho} -
    \frac{I(\rvw_0^{N-T-1};\rvbY_0^{N-1}|\rvbH_0^{N-1}, \rvbH_0^{N-1}\in
      \cH_\del^N)}{(N-T)Mr\log\rho}\Bigg]\notag\\
&\ge P_\del^T\Bigg[1  -
  \frac{1}{Mr\log\rho} -
  \frac{I(\rvbX_0^{N-1};\rvbY_0^{N-1}|\rvbH_0^{N-1}, \rvbH_0^{N-1}\in
    \cH_\del^N)}{(N-T)Mr\log\rho}\Bigg]\label{eq:DP2}
\end{align}
\end{figure*}
where~\eqref{eq:DP} follows from the data processing inequality since
$\rvw_0^{N-T-1} \rightarrow \rvbX_0^{N-1}\rightarrow \rvbY_0^{N-1}$
holds regardless of the channel realization.

In the final step we apply the channel statistics to get a
contradiction.  In particular, since fading across different blocks is
independent we can break the mutual information term in~(\ref{eq:DP})
into a simple sum
\begin{align}
&\max_{0 \le k\le N-T-1} \Pr[\cE_k] \notag \\ \ge &P_\del^T\Bigg[1 -
  \frac{1}{Mr\log\rho}-
  \frac{\sum_{j=0}^{N-1}I(\rvbX_j;\rvbY_j|\rvbH_j, \rvbH_j\in
    \cH_\del)}{(N-T)Mr\log\rho}\Bigg]\\ &\ge P_\del^T\Bigg[1 -
  \frac{1}{Mr\log\rho} -
  \frac{NM(r-\delta)\log\rho}{(N-T)Mr\log\rho}\Bigg]\label{eq:ChGain}\\ &\doteq
  \Bigg[1-
  \frac{1}{Mr\log\rho} - \frac{N(r-\delta)}{(N-T)r}\Bigg]\rho^{-Td_1(r-\del)}\label{eq:ChGain2}
\end{align}
where~\eqref{eq:ChGain} follows from the
definition~\eqref{eq.defOutageSet} of $\cH_\del$ and we recall that
there are $M$ channel uses in each coherence interval.

To see the contradiction we assume high SNR, so that the second term
vanishes.  Then, for any $\del >0$, by selecting $N > T\frac{r}{\del}$
the term $\big[1 - \frac{N(r-\del)}{(N-T)r}\big]$ is strictly
positive.  Since $\del>0$ is arbitrary, it follows that a diversity
order greater than $Td_1(r)$ cannot be achieved.

We observe that the $N$ required to realize a contradiction is
inversely proportional to $\del$.  This means that if you operate your
system to exceed the DMT by a very small amount it will take some time
for a contradiction to build up.    A coding scheme
can be designed so that early message can borrow channel resources
from later message to ensure their reliability.  But, eventually, the
borrowing builds up and later generations cannot meet their
obligations.  The parameter $N$ indexes the generation that runs into
difficulty.



\section{Coding Theorem}
\label{sec:Coding}

We present two approaches for achieving the DMT stated in
Theorem~\ref{thm:streaming_DMT}. As mentioned in the introduction, the
first approach is based on interleaving the last $T$ messages across
coherence block while the second approach is based on a
delay-universal tree code construction.

\subsection{Interleaving Scheme}
\label{sec:Coding-Interleaving}

We show that a simple interleaving based scheme suffices to achieve
the DMT stated in Theorem~\ref{thm:streaming_DMT}. Our codebook $\cC$
maps each message $\rvw_k \in \cI_M \defeq \{1,2,\ldots,
2^{Mr\log\rho}\}$ to $T$ codewords $\left\{\rvbX_0(\rvw_k),
\rvbX_1(\rvw_k),\ldots, \rvbX_{T-1}(\rvw_k)\right\}$ where each
$\rvbX_j \in \mathbb{C}^{\Nt \times \frac{M}{T}}$. Thus the overall
code is a Cartesian product: $\cC = \cC_0 \times \cC_1 \ldots
\times\cC_{T-1}$, where $\rvbX_k \in \cC_k$.  We will assume that each
codebook $\cC_j$ is sampled i.i.d.\ according to a complex normal
$\CN(0, \frac{\rho}{\Nt})$ distribution\footnote{We note however that
  any space-time code that achieves the DMT for independent parallel
  MIMO fading channels can be used for the sub-codebooks
  $\cC_0,\ldots, \cC_{T-1}$. In particular the non-vanishing
  determinant (NVD) code in~\cite{yang} can be used for these
  sub-codebooks instead of the random Gaussian codebook.}

For transmission of each message, we assume that each coherence block
of length $M$ is further divided into $T$ sub-blocks of length
$\frac{M}{T}$, as indicated in Fig.~\ref{fig:interleaving}. Suppose
that $\cI_{k,0},\ldots, \cI_{k,T-1}$ denote these intervals.  The
codeword $\rvbX_0(\rvw_k)$ is transmitted in the first sub-block
$\cI_{k,0}$ of coherence block $k$.  The codeword $\rvbX_1(\rvw_k)$ is
transmitted in the sub-block $\cI_{k+1,1}$ of coherence block ${k+1}$
and likewise $\rvbX_j(\rvw_k)$ is transmitted in the $j$-th sub-block,
$\cI_{k,j}$, of coherence block $k+j$. The corresponding output
sequences associated with message $\rvw_k$ are denoted by:
\begin{align}
\rvbY_{k,j} = \rvbH_{k+j} \rvbX_{j}(\rvw_k) + \rvbZ_{k,j}, \quad j=0,\ldots, T-1.
\end{align} 
The decoder finds the message $\hat{\rvw}_k$ such that for each $j \in
\{0,\ldots, T-1\}$ the sequence pair
  $(\rvbX_j(\hat{\rvw}_k), \rvbY_{k,j})$ is jointly weakly
  typical~\cite{coverThomas}. The shaded boxes in
Fig.~\ref{fig:interleaving} denote the intervals used for the decoding
of $\rvw_k$. The outage event at the decoder, associated with
$\hat{\rvw}_k$, is given by:
\begin{align}
\bigg\{\frac{1}{T}\sum_{j=k}^{k+T-1} C_j(\rho) \le r\log \rho\bigg\}
\label{eq:outage}\end{align}
where $C_j(\rho) = \log\det\left(I + \frac{\rho}{\Nt}\rvbH_j
\rvbH_j^\dagger\right)$. Since~\eqref{eq:outage} precisely corresponds
to the outage event of a quasi-static parallel MIMO fading channel,
with $T$ channels and a multiplexing gain of $T\cdot r,$ the
achievability of the DMT in Theorem~\ref{thm:streaming_DMT} follows
from Corollary~\ref{corol:parallel}.

\begin{figure*}
\begin{center}
\includegraphics[scale=0.5]{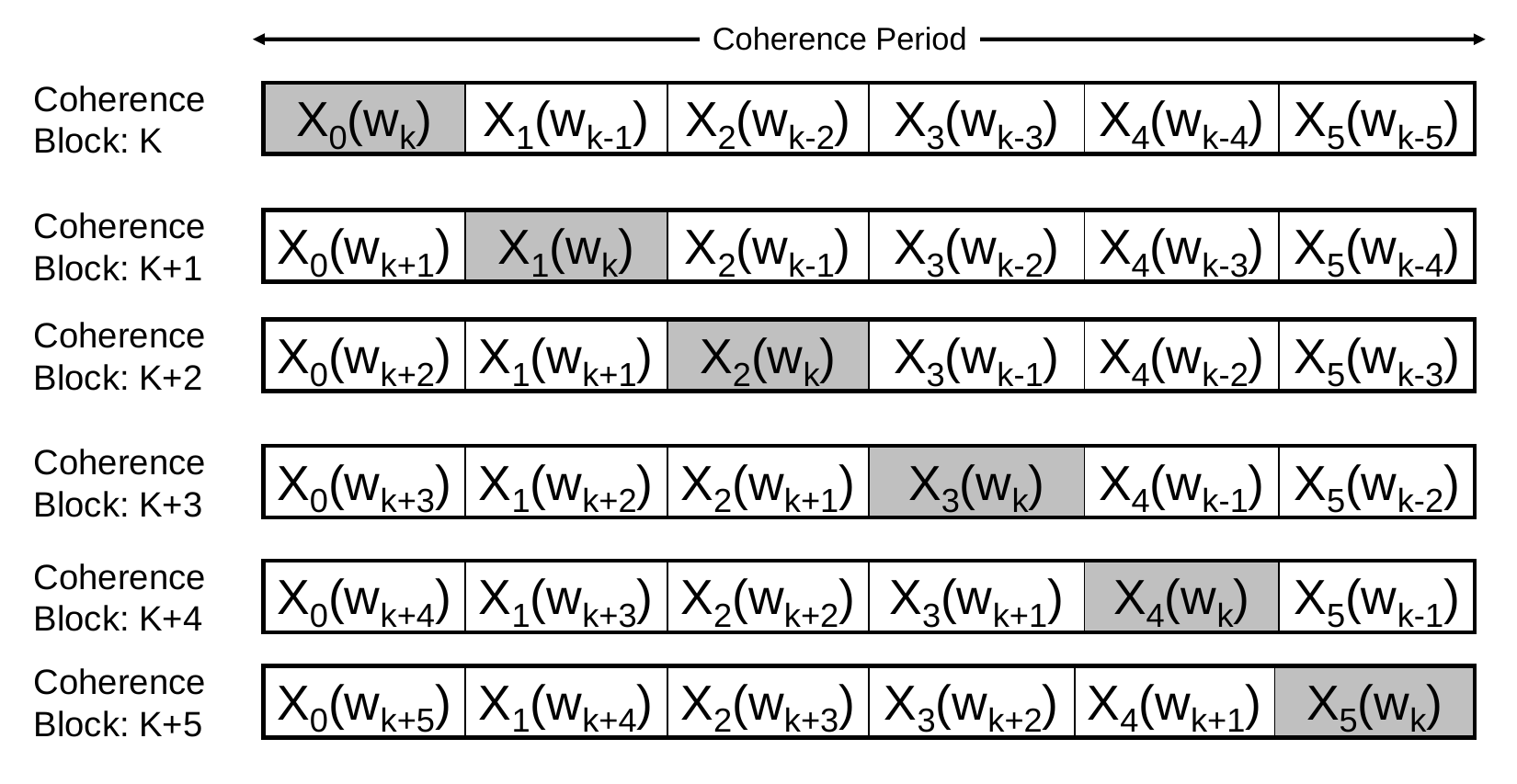}
\caption{Interleaving based coding scheme for $T=6$. Each coherence block is divided into $T$ sub-intervals and each sub-interval is dedicated to transmission of one message. The transmission of message $\rvw_k$ spans coherence blocks $k,k+1,\ldots, k+T-1$ using codewords of $\rvbX_0(\rvw_k),\ldots,\rvbX_{T-1}(\rvw_k)$ as shown by the shaded blocks.}
\label{fig:interleaving}
\end{center}
\end{figure*}

\subsection{Sequential Tree Codes}
\label{sec:Coding-Anytime}

Our second scheme is based on a sequential tree code
construction. This approach has the advantage that the encoder does
not require the knowledge of $T$. The delay constraint only needs to
be revealed to the decoder, and yet the optimal DMT is attained.

Our proposed construction consists of a sequence of codebooks
$\left\{\cC_0,\cC_1,\ldots, \cC_k,\ldots\right\}$, where $\cC_k$ is
the codebook to be used in coherence block $k$ when messages
$(\rvw_0,\ldots, \rvw_k)$ are revealed to the encoder.  Codebook
$\cC_k$ consists of a total of $2^{MR(k+1)}$ codeword sequences, with
one codeword for each element in the set:
\begin{equation}
\label{eq:msg_tuple}
\cI_M^{k+1} = \left\{~(\svw_0,\ldots, \svw_k): \svw_0 \in \cI_M,
\ldots, \svw_k \in \cI_M\right\}.
\end{equation} 
where $\cI_M \defeq \{1,2,\ldots, 2^{MR}\}$. All codewords are sampled
i.i.d.\ from $\CN\left(0,\frac{\rho}{\Nt}\right)$ and are revealed to
both the encoder and the decoder in advance\footnote{We will make the
  assumption that the communication terminates after a sufficiently
  large but fixed number of coherence blocks.}.  In coherence block
$k$, the encoder maps $\rvw_0,\ldots, \rvw_k$ to the codeword
$\rvbX_k(\rvw_0^k) \in \mbC^{\Nt \times M}$ in $\cC_k$, and transmits
it over $M$ channel uses. The entire transmitted sequence up to and
including block $k$ is denoted
by \begin{multline}\label{eq:transmit_sequence} \rvbX_0^k(\rvw_0^k)
  \defeq \left\{ \rvbX_0(\rvw_0),\rvbX_1(\rvw_0^1),\ldots,
  \rvbX_k(\rvw_0^k) \right\}, \\ \rvbX_0^k(\rvw_0^k) \in \mbC^{\Nt
    \times (k+1)M}
\end{multline}

The decoder uses a sequential, decision-directed decoding rule.  We
focus on the decoding of message $\rvw_k$ at the end of coherence
block ${\Tk = k+T-1}$, which corresponds to the deadline of message
$\rvw_k$. The decoder considers the entire received sequence
$\rvbY_0^{\Tk} = (\rvbY_0,\ldots, \rvbY_{\Tk})$ and computes a fresh
estimate of all the messages up to time $k$ in ${k+1}$ steps as
follows. In the first step, the decoder searches over all message
sequences $\hvw_0^{\Tk}$ such that the pair
$(\rvbX_0^{\Tk}(\hvw_0^{\Tk}), \rvbY_{0}^{\Tk})$ is jointly
typical. If each such message sequence has a unique prefix, say
$\bvw_0$, then $\bvw_0$ is selected as the message in block
$0$. Otherwise an error is declared. Once the message $\bvw_0$ is
fixed in the first step, the decoder then proceeds to the second
step. It searches for the message sequences $\hvw_{1}^{\Tk}$ such that
the pair $(\rvbX_1^\Tk(\bvw_0, \hvw_{1}^\Tk), \rvbY_1^\Tk)$ is jointly
typical. If each such message sequence has a unique prefix, say
$\bvw_1$ then it is selected as the message in block $1$. Otherwise an
error is declared. Once the message $\bvw_1$ is fixed the decoder
proceeds sequentially, producing  $\bar{\rvw}_2,\ldots, \bar{\rvw}_k$.
In determining $\bar{\rvw}_l$, with $l \leq k$, the decoder fixes
$\bar{\rvw}_0^{l-1}$ and searches for a sequence of messages
$\hvw_l^{\Tk}$ such that the corresponding transmit sequence
$\rvbX_0^{\Tk}(\bar{\rvw}_0^{l-1}, \hvw_l^{\Tk})$ has the property
that the sub-sequence between $l$ to $\Tk$ (the suffix) satisfies
\begin{equation} 
(\rvbX_l^{\Tk}(\bar{\rvw}_0^{l-1}, \hvw_l^{\Tk}), \rvbY_{l}^{\Tk}) \in
  \cT_{l,\Tk},\label{eq:decoding_rule}
\end{equation} 
where the set $\cT_{l,l'}$ is the set of all jointly typical
sequences~\cite{coverThomas},
\begin{multline} \cT_{l,l'}
  = \bigg\{(\svbX_l^{l'}, \svbY_l^{l'} ): \svbX_l^{l'} \in \cT(p_{\rvbX_l^{l'}}),
  \svbY_l^{l'} \in \cT(p_{\rvbY_l^{l'}}), \bigg|  \\ \frac{\sum_{k=l}^{l'} [-\log p_{\rvbX_k,
    \rvbY_k}(\svbX_k, \svbY_k) - h(p_{\rvbX_k,
    \rvbY_k})]}{M(l'-l+1)}\bigg| \le \eps \bigg\}.
\label{eq:jointly_typical}
\end{multline}
In~(\ref{eq:jointly_typical}) $\cT(p_{\rvbX_l^{l'}})$ and
$\cT(p_{\rvbY_l^{l'}})$ denotes the set of typical $\{\rvbX_l^{l'}\}$
and $\{\rvbY_l^{l'}\}$ sequences respectively and $h(p_{\rvbX_k,
  \rvbY_k})$ denotes the differential entropy of jointly Gaussian
random variables.

\begin{figure*}
\centering
\includegraphics[scale=0.5]{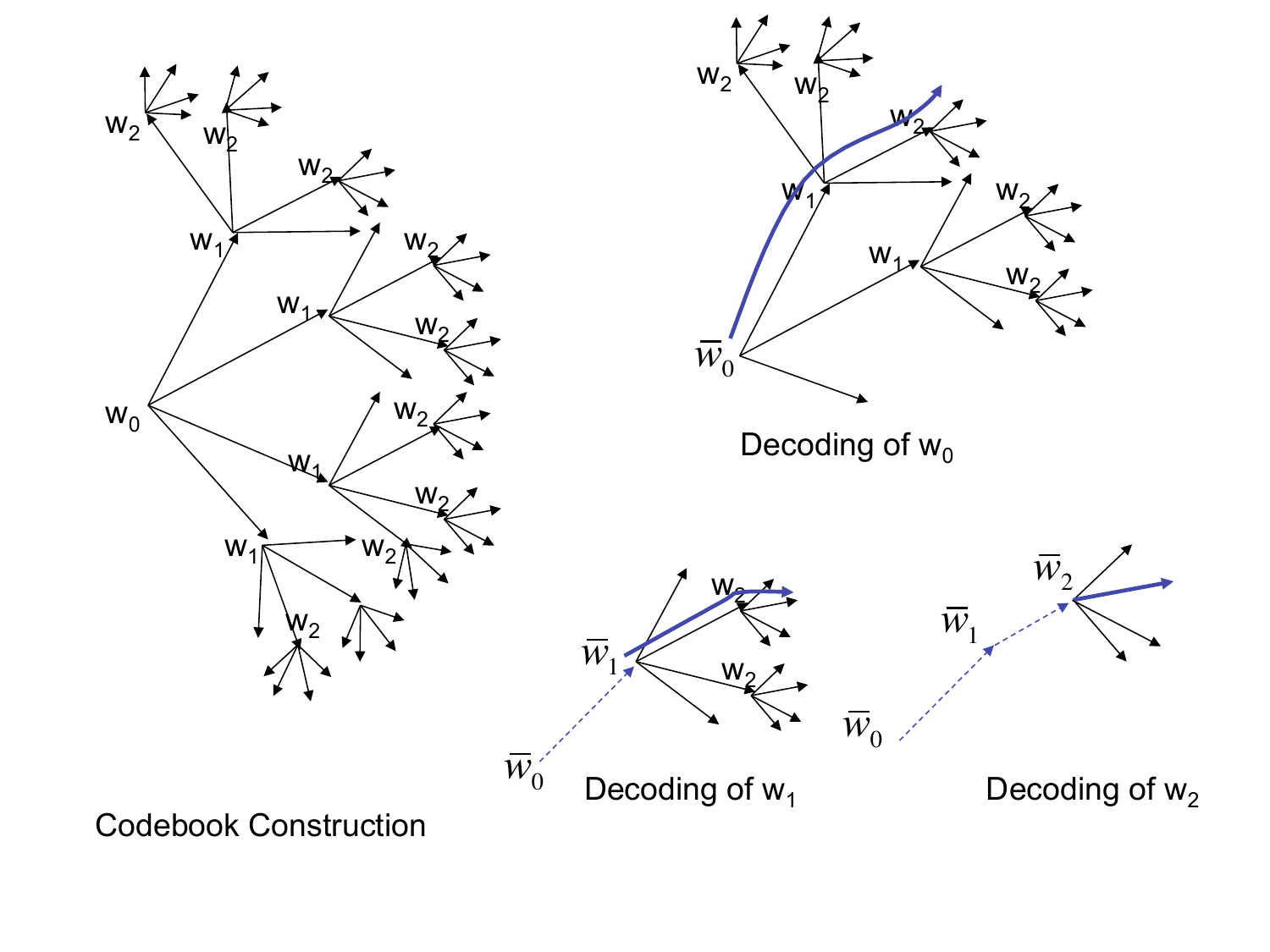}
\caption{ The left  figure illustrates the tree-codebook.  The message $\rvw_0$ is mapped to one of
  $2^{MR}$ codewords in the first level, the message pair
  $(\rvw_0,\rvw_1)$ is mapped to one of $2^{2MR}$ codewords in the
  second level etc., While decoding $\rvw_k$ the decoder  starts at the root of the tree. It first finds all possible transmit
  paths of depth ${k+T-1}$ in the tree, typical with the received
  sequence. If a unique prefix codeword $\bar{\rvw}_0$ is determined
then  the decoder moves along the path of $\bar{\rvw}_0$ and finds all possible
  codewords from level $1$ to ${k+T-1}$ that are typical with the
  received codeword. A unique message $\bar{\rvw}_1$ is determined if there
  is a unique prefix codeword in this level. This process continues
  till level ${k}$ is reached and $\bar{\rvw}_k$ is determined. }
 \label{fig:treeCode}
\end{figure*}

If the list of all message sequences $\hvw_l^{\Tk}$ that
satisfy~\eqref{eq:decoding_rule} have a unique prefix $\bar{\rvw}_l$
then we concatenate $\bar{\rvw}_l$ with $\bar{\rvw}_0^{l-1}$ to get
$\bar{\rvw}_0^l$, otherwise an error is declared.  When the process
continues to step $k+1$ without declaring an error,
$\bar{\rvw}_k$ is declared to be the output message estimate, i.e.,  $\hat{\rvw}_k = \bar{\rvw}_k$.

Fig.~\ref{fig:treeCode} illustrates the codebook construction and the
proposed sequential decoding rule.  The figure on the left hand side
illustrates the sequential tree code.  The right figures illustrate
the sequential decoding of $\rvw_0$, $\rvw_1$ and $\rvw_2$. When
decoding $\rvw_0$, we consider all possible paths in the tree typical
with the received sequence. If all such paths lead to a unique prefix
$\bvw_0$, we declare this to be the message. Otherwise an error is
declared. Once $\bvw_0$ is fixed, we move along the path of $\bvw_0$
in the tree. Thereafter we search for all paths in the tree from level
$1$ to $k+T-1$ that are typical with the received sequence. This
process continues until level $k$ is reached and $\bar{\rvw}_k$ is
determined.

\begin{remark}
Our decoder is a decision directed decoder. In estimating $\bvw_0^k$,
it first estimates $\bvw_0$ based on $\rvbY_0^{\Tk}$.  It next makes a
conditional estimate of $\bvw_1$ based on $\rvbY_1^{\Tk}$ with
$\bvw_0$ fixed, and continues along in ${k+1}$ steps.  One may be
tempted to try a simpler decoding scheme that avoids the ${k+1}$ steps
and directly search for a unique prefix $\hvw_0^k$ such that the
resulting transmit sequence $\rvbX_0^{\Tk}$ is jointly typical with
the received sequence $\rvbY_0^{\Tk}$ i.e.,
\begin{align}\bigg\{\bigg| \frac{\sum_{k=0}^{\Tk} [-\log p_{\rvbX_k,
    \rvbY_k}(\svbX_k, \svbY_k) - h(p_{\rvbX_k,
    \rvbY_k})]}{M(k+T) }\bigg| \le \eps \bigg\}.\label{eq:typ-test}\end{align}

Such an approach will not guarantee the recovery of the true $\rvw_k$
with high probability. This is because for $k\gg 1$ the contribution
of the terms before $\hat{\rvw}_k$ will dominate. Even when
$\hat{\rvw}_k \neq \rvw_k$ but ${\hat{\rvw}_0^{k-1}=\rvw_0^{k-1}},$
the pair $(\hat{\rvbX}^{\Tk}, \rvbY^{\Tk})$ will in general
satisfy~\eqref{eq:typ-test} as for $k \gg 1$ the contribution of the
suffix associated with $\hat{\rvw}_k$ will be negligible.  In other
words, our proposed decision directed decoder
in~\eqref{eq:decoding_rule} fixes the older messages and guarantees
that when decoding $\rvw_k$ we do not include the bias introduced by
$\rvw_0^{k-1}$ in~\eqref{eq:typ-test}.
\end{remark}

\subsubsection{Analysis of error probability}
We show that for any $\del > 0$ and $0 < r < \min(\Nr , \Nt)$, the
error probability averaged over the ensemble of codebooks satisfies
$\Pr(\hat{\rvw}_k \neq \rvw_k) \stackrel{\cdot}{\le} \rho^{-T\cdot
  d(r)}$. In our analysis, we exploit symmetry in the code
construction, as well as the encoding and decoding functions.  To lay out the analysis 
assume, without loss of generality, that a particular message sequence
$\rvw_0^k = \svw_0^k$ has been transmitted. Define the events\footnote{All
  the error events $\cE_l$ are defined for the decoder at time
  $T+k-1$. However we suppress this dependence to keep the notation
  compact.}
\begin{multline}
\cE_l = \bigg\{ \svw_0^l: (\svw_0,\ldots, \svw_{l-1})= \\(w_0,\ldots,
w_{l-1}), \; \bvw_l \neq w_l \bigg\},  \qquad 0 \le l \le
k \label{eq:Pe_def}
\end{multline} 
and note that
\begin{equation} 
 \Pr\left\{\bvw_k \neq w_k \right\} \le \sum_{l=0}^k \Pr(\cE_l),
 \label{eq:union_bound}
\end{equation}
where $\cE_l$ corresponds to the event that our proposed decoder fails
in step $l$ of the decoding process. We develop an upper bound on
$\cE_l$ for each $0 \le l \le k$ and substitute these bounds
in~\eqref{eq:union_bound}. We further express ${\cE_l = \cA_l \cup
  \cB_l,} $ where
\begin{equation}
\cA_l = \left\{ (\svbX_l^{\Tk}(w_0^\Tk), \svbY_l^{\Tk}) : (\svbX_l^{\Tk}(w_0^\Tk), \svbY_l^{\Tk}) \notin
\cT_{l,\Tk} \right\}
\label{eq:MsgSeq_Atypical}
\end{equation}
denotes the event that a decoding failure happens because the
transmitted sub-sequence starting from position $l$ fails to be
typical with the received sequence, whereas
\begin{equation}
\cB_l = \bigg\{ \bar{w}_0^{T_k}: \bar{w}_0^{l-1} = w_0^{l-1}, \bar{w}_l
\neq w_l,  (\svbX_l^{\Tk}(\bar{w}_0^\Tk), \svbY_l^{\Tk}) \in
\cT_{l,\Tk} \bigg\}
\label{eq:FalseMsgSeq_Typical}
\end{equation}
denotes the event that the decoding failure happens because a transmit
sequence corresponding to a message sequence with ${\bar{w}_l \neq w_l}$
appears typical with the received sequence.

As shown in the Appendix~\ref{app:PrE_A_proof}, using an appropriate
Chernoff bound we can express,
\begin{equation}
\Pr(\cA_l) \le 2^{-M (T_k - l + 1) f(\eps)} = 2^{-M(T+k-l)f(\eps)} \label{eq:PrE_A_Bound}
\end{equation}
where $f(\eps)$ is a function that satisfies $f(\eps)>0$ for each $\eps>0$.

To bound $\Pr(\cB_l)$ we begin by noting that by our code
construction, we are guaranteed that whenever $\bar{w}_l \neq w_l$, the
associated transmitted subsequence $\rvbX_l^\Tk(\bar{w}_0^{\Tk})$ is sampled
independently of $\rvbY_l^\Tk$. Hence from the joint typicality
analysis~\cite{coverThomas}, we have that for any sequence $\bar{w}_0^\Tk$
with ${\bar{w}_l \neq w_l}$
{\allowdisplaybreaks{\begin{align*}
&\Pr\Big((\rvbX_l^{\Tk}(\bar{w}_0^\Tk), \rvbY_l^{\Tk})  \in
\cT_{l,\Tk}~|~\rvbH_l^\Tk = \bH_l^\Tk\Big) \\&\le 2^{-M
  \left(\sum_{j=l}^\Tk I(\rvbx_j; \rvby_j|\rvbH_j=\bH_j) - 3\eps\right)}\\ & =
2^{-M \left(\sum_{j=l}^\Tk C_j(\rho; \bH_j) - 3\eps\right)}
\end{align*}}}
where   
\begin{equation}
C_j(\rho; \bH_j) \defeq \log\det\left(\bI + \frac{\rho}{\Nt} \bH_j
\bH_j^\dagger\right)\label{eq:MIMO_Cap}
\end{equation} 
is the associated mutual information between the input and output in
the $j$-th coherence block when the channel matrix $\rvbH_j = \bH_j$.
Applying the union bound we have that
{\allowdisplaybreaks{\begin{align}
&\Pr(\cB_l~|\rvbH_l^\Tk=\bH_l^{\Tk}) \\ &\le \sum_{\bar{w}_{l}^{\Tk} \in \cI_{M}^{\Tk-l+1}} \Pr\Big((\rvbX_l^{\Tk}(\bar{w}_0^\Tk), \rvbY_l^{\Tk})  \in
\cT_{l,\Tk}~|~\rvbH_l^\Tk=\bH_l^\Tk\Big) \\
&\le \sum_{\bar{w}_{l}^{\Tk} \in \cI_{M}^{\Tk-l+1}} 2^{-M \left(\sum_{j=l}^\Tk C_j(\rho; \bH_j) - 3\eps\right)}\\
&\le \left(2^{M(\Tk-l+1)R} \right)  2^{-M \left(\sum_{j=l}^\Tk C_j(\rho; \bH_j) - 3\eps\right)}\\
&\le 2^{-M \left(\sum_{j=l}^\Tk
    C_j(\rho; \bH_j)-(\Tk-l+1)R- 3\eps\right)}.\label{eq:PrE_B_Bound}
\end{align}}}

To bound $\Pr(\cB_l)$ we define 
{\allowdisplaybreaks{\begin{multline}
\cO_l = \Bigg\{ (\bH_l,\ldots, \bH_{\Tk}): \\ \sum_{i=l}^{\Tk}C_i(\rho; \bH_i)
\le  (k+T-l)r \log\rho +(k-l)\Delta(r)\log \rho + 4\epsilon \log
\rho\Bigg\}
\label{eq:OlDef}\end{multline} }}
where
\begin{equation}
\Delta(r) = -\frac{d_1(r)}{2d_1'(r)} \label{eq:del-def}
\end{equation}
where we recall that $d_1(r)$ denotes the quasi-static
DMT~\eqref{eq:quasi-static_DMT} of the MIMO fading channel and we use
$d_1'(r)$ to denote its right derivative. Note that $d_1'(r) <0$ for
all $r \in [0, \min(\Nt, \Nr)]$ as the DMT is a decreasing function of
$r$. Thus it follows that $\Delta(r)>0$.

Note that
\begin{align}
\Pr(\cB_l) &\le \Pr(\cB_l~|~\rvbH_l^\Tk \in \cO_l^c) + \Pr(\rvbH_l^\Tk \in\cO_l). 
\label{eq:outage_events}
\end{align}
From~\eqref{eq:OlDef} and~\eqref{eq:PrE_B_Bound} we have \begin{align}
  \Pr(\cB_l~|~\rvbH_l^\Tk \in \cO_l^c) &\le 2^{-M{(k-l)}\Delta(r)\log \rho - M{\eps}\log
    \rho} \\ &= \rho^{-M\eps -
    M(k-l)\Delta(r)}. \label{eq:PrE_B_Bound_Oc}
\end{align}
We next upper bound the second term in~\eqref{eq:outage_events}.  Note
that $\cO_l$ precisely corresponds to the parallel MIMO channel in
Corollary~\ref{corol:parallel} with $L \defeq \Tk - l +1 = k +T-l$
channels, and multiplexing gain  $s={L r + (k-l)\Delta(r)
  +4\eps}$. The associated DMT satisfies:
\begin{align}
L \cdot d_1\left(\frac{s}{L}\right) &= L \cdot d_1\left(r + \frac{k-l}{L}\Delta(r) + \frac{4\eps}{L}\right)\\
&=L \cdot d_1\left(r + \frac{k-l}{L}\Delta(r) \right) + o_\eps(1) \label{eq:d1-cont} \\
&\ge L\left(d_1(r) + d_1'(r) \frac{k-l}{L}\Delta(r)\right) + o_\eps(1) \label{eq:d1-diff}\\
&= Ld_1(r) - \frac{(k-l)}{2}d_1(r) +o_\eps(1) \label{eq:del-sub}\\
&= Td_1(r) + (k-l) d_1(r)-\frac{(k-l)}{2}d_1(r) +o_\eps(1)\label{eq:L-sub}\\
&= T d_1(r) + \frac{(k-l)}{2}d_1(r) + o_\eps(1)
\end{align}
where we use the continuity of $d_1(r)$ in~\eqref{eq:d1-cont} and let $o_\eps(1)$ be a function of $\eps$ that
vanishes as $\eps \rightarrow 0$. We use the convexity of $d_1(r)$ in~\eqref{eq:d1-diff}.
We substitute~\eqref{eq:del-def} for $\Delta(r)$ in~\eqref{eq:del-sub} and substitute $L=T+k-l$ in~\eqref{eq:L-sub}.
Thus we have\begin{align} \Pr(\cO_l) &\stackrel{\cdot}{\le} \rho^{-(T
  d_1(r) + \frac{(k-l)}{2}d_1(r))+ o_\eps(1)}.\label{eq:P_Ol_Bound}
\end{align}
From~\eqref{eq:outage_events} and
substituting~\eqref{eq:PrE_B_Bound_Oc} and~\eqref{eq:P_Ol_Bound} and
using $\cE_l = \cA_l \cup \cB_l$ we have
\begin{align}
&\Pr(\cE_l) \le \Pr(\cA_l) + \Pr(\cB_l) \\
&\stackrel{\cdot}{\le} 2^{-M (T+k - l) f(\eps)} +  \rho^{-M\eps - M(k-l)\Delta(r)} +  \rho^{-Td_1(r)  -\frac{(k-l)}{2}d_1(r)+o_\eps(1)}.
\label{eq:PrEl_Bound}
\end{align}

From the union bound,
\begin{align}
&\Pr(\cE)  \le    \sum_{l=0}^k \Pr(\cE_l)\\
&\le \sum_{l=0}^k 2^{-M (T+k - l) f(\eps)} + \sum_{l=0}^k  \rho^{-M\eps - M(k-l)\Delta(r)} +  \notag\\ & \sum_{l=0}^k \rho^{-Td_1(r)  -\frac{(k-l)}{2}d_1(r)+o_\eps(1)} .
\label{eq:Pe_bound_two_terms}
\end{align}
We upper bound the first term
in~\eqref{eq:Pe_bound_two_terms} as
\begin{align}
&\sum_{l=0}^k 2^{-M (k+T - l ) f(\eps)} = \sum_{l=0}^k 2^{-M (l+T ) f(\eps)} \\
&\le \sum_{l=0}^\infty 2^{-M (l+T ) f(\eps)} \le 2^{-MT f(\eps)+1},
\end{align}
which vanishes as $M\rightarrow\infty$. By a similar argument we can
upper bound the second term as
\begin{align}
\sum_{l=0}^k  \rho^{-M\eps - M(k-l)\Delta(r)} &= \rho^{-M\eps}\sum_{l=0}^k  \rho^{-Ml\Delta(r)}\\
&\le \rho^{-M\eps}\sum_{l=0}^\infty  \rho^{-Ml\Delta(r)} \le 2\rho^{-M\eps}
\end{align}
for sufficiently large $\rho $ and $M$ such that $\rho^{-M\Delta(r)} \le \frac{1}{2}$.
In a similar fashion we can upper bound the third term in~\eqref{eq:Pe_bound_two_terms} as
\begin{align}
\sum_{l=0}^k \rho^{-Td_1(r)  -\frac{(k-l)}{2}d_1(r)+o_\eps(1)} &\le \rho^{-Td_1(r)+o_\eps(1)} \sum_{l=0}^k \rho^{-\frac{l}{2}d_1(r)} \\
&\le 2 \rho^{-Td_1(r)+o_\eps(1)}.
\end{align}
From~\eqref{eq:Pe_bound_two_terms} we have that
\begin{align}
\Pr(\cE) \le 2(2^{-MTf(\eps)} + \rho^{-Td_1(r)+o_\eps(1)} + \rho^{-M\eps})
\end{align}
By selecting $M \ge \frac{d_1(r)\log\rho}{f(\eps)}$, we have that $2^{-MTf(\eps)} \le \rho^{-Td_1(r)+o_\eps(1)}$. Finally since $\eps>0$ can be selected as small as required and $o_\eps(1)\rightarrow 0$ as $\eps \rightarrow 0$ we have that $\Pr(\cE) \stackrel{\cdot}{\le} \rho^{-T d_1(r)}$. This completes the error analysis for the sequential tree code.


\section{Multiple Messages per Coherence Block}
\label{sec:multiple}

\begin{figure*}
\begin{center}
\includegraphics[width=\linewidth]{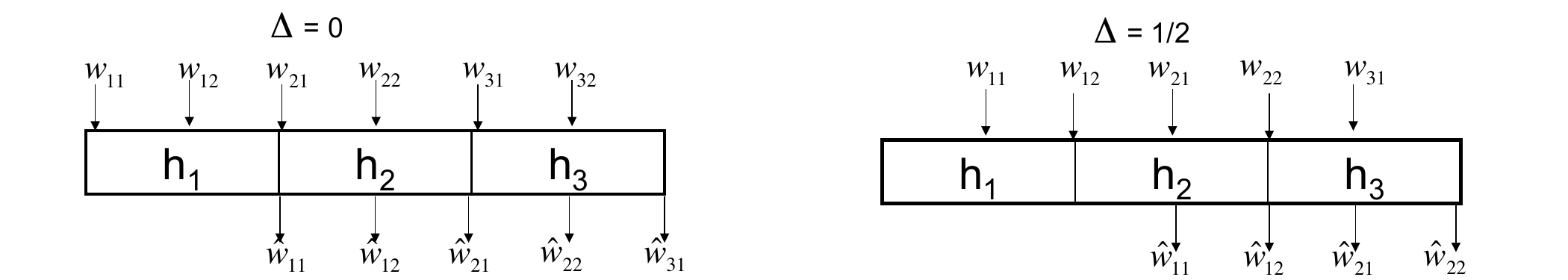}
\end{center}
\caption{Streaming setup with two messages arriving in each coherence
  block. In coherence block $k$ two messages, $\rvw_{k,1}$ and
  $\rvw_{k,2},$ arrive as shown in the figure. We assume that
  $\rvw_{k,1}$ arrives $\Delta \cdot M$ symbols after the start of the
  coherence block $k$, while $\rvw_{k,2}$ arrives $\left(\Delta +
  \frac{1}{2}\right)\cdot~M$ symbols from the start of coherence block
  $k$.  We assume a decoding delay of one coherence block for each
  message as shown. The left figure illustrates the case when
  $\Delta=0,$ while the right figure shows the case when $\Delta =
  \frac{1}{2}$.}
\label{fig:multiple}
\end{figure*}

Our focus so far has been on the case where only one message arrives
in each coherence block. In this section, we consider some special
cases when two messages, say $\rvw_{k,1}$ and $\rvw_{k,2}$ arrive in
each coherence block. Each message must be decoded after $M \cdot T$
channel uses, where $T$ denotes the delay in coherence blocks. In such
a setup, the number of coherence blocks seen by $\rvw_{k,1}$ before
its deadline, will be different from $\rvw_{k,2}$. Thus a simple
interleaving techniques such as is presented in
Section~\ref{sec:Coding-Interleaving} is no longer optimal; more
sophisticated coding techniques that exploit the asymmetry between
$\rvw_{k,1}$ and $\rvw_{k,2}$ will be required.

 Assume that $\rvw_{k,1}$ arrives at time $t_{k,1}= (k-1)M+
 M\cdot\Delta$, while $\rvw_{k,2}$ arrives at time $t_{k,2} = (k-1)M +
 M\left(\Delta + \frac{1}{2}\right)$ where $\Delta \in [0,1/2]$
 denotes the offset relative to the start of the coherence block when
 $\rvw_{k,1}$ arrives. Fig.~\ref{fig:multiple} denotes the streaming
 setup with two messages per block corresponding to $\Delta=0$ and
 $\Delta = 1/2$ respectively.  We obtain the optimal DMT for the SISO
 channel with $T=1$ and $\Delta=0$, which corresponds to the first
 case in Fig.~\ref{fig:multiple}, in Section~\ref{subsec:delta}. By
 the symmetry of the problem, the same result also applies when
 $\Delta=1/2$, illustrated in the second case in
 Fig~\ref{fig:multiple}. In subsection~\ref{subsec:unknown} we show
 that if either $\Delta=0$ or $\Delta=1/2$, but its actual value is
 not known to the encoder, the DMT is strictly smaller.

\subsection{DMT when $\Delta=0$.}
\label{subsec:delta}
We assume that each message $\rvw_{k,i}$ is uniformly distributed in
the set $\cI_M = \{1,2,\ldots, 2^{MR/2}\}$, so that a total of $MR$
information bits arrive in each coherence block of length $M$. We
consider a SISO channel model.  Let $\rvh_k$ denote the channel gain
in coherence block $k$, and denote the corresponding input sequence as
$\rvx_k^M$. We split $\rvx_k^M = \left(\rvx_{k,1}^{M/2},
\rvx_{k,2}^{M/2}\right) \defeq \left(\rvbx_{k,1}, \rvbx_{k,2}\right)$
into two subsequences, each of length $M/2$.  The input sequence
$\rvbx_{k,1}$ can depend on $(\rvw_{k,1},
\rvw_{k-1,1},\rvw_{k-1,2},\ldots),$ while $\rvbx_{k,2}$ can depend on
$(\rvw_{k,1},\rvw_{k,2},\rvw_{k-1,1},\rvw_{k-1,2},\ldots)$. The
received sequence $\rvy_k^M$ is also partitioned into
$\left(\rvy_{k,1}^{M/2},\rvy_{k,2}^{M/2}\right) \defeq
(\rvby_{k,1},\rvby_{k,2})$.  Recall that for the SISO channel, we have
$\rvby_{k,i}=\rvh_k\cdot \rvbx_{k,i} + \rvbz_{k,i}$ where the additive
noise sequence $\rvbz_{k,i}$ is sampled i.i.d.\ from $\CN(0,1)$. We
will assume that each message $\rvw_{k,i}$ must be decoded with a
delay of $T=1$ coherence period. Thus, $\rvw_{k,1}$ must be decoded at
the end of coherence block $k$ whereas $\rvw_{k,2}$ must be decoded in
the middle of coherence block ${k+1}$, as is illustrated in
Fig.~\ref{fig:multiple}.  The following result shows how to exploit
the asymmetry in channel conditions experienced by $\rvw_{k,1}$ and
$\rvw_{k,2}$ to attain a higher DMT than that which can be obtained
through simple interleaving.\footnote{We will drop the subscript
  $d_T(\cdot)$ in this section for the DMT since we fix $T=1$.}
 \begin{prop}
 \label{prop:exam-1}
  The optimal DMT of the SISO streaming setup with two messages per coherence block, $\Delta=0$, and $T=1$, is: 
 \begin{align}
 d(r) = \min\left(1-\frac{r}{2}, 2-2r\right), \quad r \in [0,1].\label{eq:DMT_ex}
 \end{align}
 \end{prop}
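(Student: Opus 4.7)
The plan is to prove achievability and converse separately, with the upper bound being more subtle.

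For achievability, I would use a superposition coding scheme with successive interference cancellation. Encode $\rvw_{k,1}$ into the $M$ symbols of block $k$ at transmit power $\rho$, and encode $\rvw_{k,2}$ into its $M$ symbols (split evenly between the second half of block $k$ and the first half of block $k+1$) at reduced transmit power $\rho^{1-r/2}$. The decoder first decodes $\rvw_{k,1}$ treating the type-2 interference as Gaussian noise; in the regime $|\rvh_k|^2 \gtrsim \rho^{-(1-r/2)}$, the effective SINR is $\doteq \rho^{r/2}$, which is just enough to support rate $R/2$, so the outage probability is $\doteq \rho^{-(1-r/2)}$ and diversity is $1-r/2$. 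After subtracting the decoded $\rvw_{k,1}$, the residual channel for $\rvw_{k,2}$ is two interference-free parallel sub-channels with fadings $\rvh_k,\rvh_{k+1}$ at effective SNR $\rho^{1-r/2}$. Applying Corollary~\ref{corol:parallel} at this reduced SNR (with the multiplexing gain matched to $R/2$) yields outage condition $(1-r/2-\alpha_k)^+ + (1-r/2-\alpha_{k+1})^+ < r$, equivalently $\alpha_k + \alpha_{k+1} > 2-2r$, giving diversity $2-2r$. Taking the worse of the two message diversities yields $\min(1-r/2, 2-2r)$.

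For the converse, the first bound $d(r) \leq 1-r/2$ follows directly from applying Theorem~\ref{thm:DMT_Zheng} to $\rvw_{k,1}$: this message carries $MR/2$ bits and must be recoverable from the single-block channel output $\rvbY_k$, so the single-block outage event $\{|\rvh_k|^2 < \rho^{-(1-r/2+\delta)}\}$ of probability $\doteq \rho^{-(1-r/2+\delta)}$ forces it into outage through Fano's inequality. The second bound $d(r) \leq 2-2r$ is more delicate and uses a multiple-access sum-rate argument. In the second half of block $k$, both $\rvw_{k,1}$ and $\rvw_{k,2}$ share a Gaussian channel, so the MAC sum-rate bound gives a per-symbol total rate $\leq C_k = \log(1+|\rvh_k|^2\rho)$. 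For $\rvw_{k,1}$ to retain its own diversity of $1-r/2$ it must operate at per-symbol rate $R/2$ throughout block $k$, leaving at most $(C_k - R/2)^+$ per symbol for $\rvw_{k,2}$ there, and analogously in the first half of block $k+1$. Summing the two half-block contributions shows $\rvw_{k,2}$ can achieve its rate $MR/2$ only when $(1-r/2-\alpha_k)^+ + (1-r/2-\alpha_{k+1})^+ \geq r$; its complement $\alpha_k + \alpha_{k+1} > 2-2r$ has probability $\doteq \rho^{-(2-2r)}$, which bounds the diversity by $2-2r$.

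The main obstacle is formalizing Part 2 of the converse when $\rvw_{k,1}$'s per-symbol rate is non-uniform across the two halves of block $k$, since a crude LP over the half-block MAC constraints only yields the looser bound $d \leq 2-3r/2$. I expect the resolution goes via an outage amplification argument in the spirit of Section~\ref{subsec:conv}: assuming $d(r) \geq 2-2r+2\delta$, one considers the joint event $\{\alpha_k + \alpha_{k+1} > 2-2r+2\delta,\ \alpha_k,\alpha_{k+1} < 1-r/2\}$ (so that $\rvw_{k,1}$ and $\rvw_{k+1,1}$ are \emph{not} themselves in outage), and then shows through a chain of decodings across consecutive blocks that any scheme must violate the MAC sum-rate constraint for $\rvw_{k,2}$ on two parallel channels, producing the required contradiction. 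Combining the two converse pieces with the achievability gives $d(r) = \min(1-r/2, 2-2r)$.
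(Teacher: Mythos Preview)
Your achievability scheme has a genuine causality gap. You claim that after subtracting $\rvw_{k,1}$ the residual channel for $\rvw_{k,2}$ is ``two interference-free parallel sub-channels,'' but in the first half of block $k{+}1$ the transmitted signal is $\rvh_{k+1}\bigl(\rvbx_{k+1,1,\text{I}}+\rvbx_{k,2,\text{II}}\bigr)+\rvbz$, with $\rvw_{k+1,1}$ at full power $\rho$ sitting on top of $\rvw_{k,2}$ at power $\rho^{1-r/2}$. Since $\rvw_{k+1,1}$ is a single length-$M$ codeword, it cannot be decoded from only the first $M/2$ symbols of block $k{+}1$ (its deadline is the \emph{end} of that block), and treating it as noise leaves $\rvw_{k,2}$ with SINR $\doteq\rho^{-r/2}$, which is useless. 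So your second half-block for $\rvw_{k,2}$ is not clean, and the claimed $2-2r$ diversity for $\rvw_{k,2}$ does not follow. The paper resolves exactly this issue by \emph{splitting} each $\rvw_{k,1}$ into two rate-$R/4$ parts: $\rvw_{k,1}^{2}$ is sent alone (with $\rvw_{k,2}$ superposed at power $\rho^{1-r/2}$) in the second half of block $k$ and decoded first to get the $1-r/2$ term, while $\rvw_{k,1}^{1}$ is \emph{jointly} encoded with $\rvw_{k-1,2}$ at full power in the first half of block $k$; then $(\rvw_{k,2},\rvw_{k+1,1}^{1})$ are jointly decoded across the two half-blocks, which is what produces the $2-2r$ term without any uncancelled high-power interferer.

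Your converse for $d(r)\le 2-2r$ is also working much harder than necessary. The paper's argument is a two-line genie/relaxation: reveal each $\rvw_{k,2}$ to the \emph{encoder} at the start of block $k$ (this can only help), and relax every deadline to the end of block $k{+}1$ (this can only help). The resulting system is exactly the one-message-per-block streaming model at rate $R$ with delay $T=2$, so Theorem~\ref{thm:streaming_DMT} gives $d(r)\le 2d_1(r)=2-2r$ immediately. No MAC sum-rate bookkeeping or outage amplification is needed here; that machinery is reserved for Proposition~\ref{prop:exam-2}. Your $1-r/2$ bound is fine and matches the paper.
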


\subsubsection*{Converse}

The upper bound is based on two genie aided arguments. The bound $d(r) = 1-r/2$  follows by revealing every message $\rvw_{k,2}$ to the destination. Thus message $\rvw_{k,1}$ needs to be decoded at the end of coherence block $k$. Since $\rvw_{k,1}$ is uniformly distributed in $\{1,2,\ldots, 2^{MR/2}\}$ and has a rate of $R/2$, it follows that the DMT for this genie aided channel equals $d(r) = 1-r/2$.

To establish the other upper bound of  $d(r) = 2-2r$ we consider another genie aided channel.  We reveal message $\rvw_{k,2}$ at the start of  coherence block $k$ and relax the deadline of $\rvw_{k,1}$ and $\rvw_{k,2}$ such that both only need to be decoded at the end of the coherence block ${k+1}$. Such an assumption can clearly only improve the DMT. However the setup now is identical to that considered in  Theorem~\ref{thm:streaming_DMT} where the message $\rvw_k = (\rvw_{k,1},\rvw_{k,2})$ arrive at the start of coherence block $k$ and must be decoded with a delay of $T=2$ coherence blocks.  The associated DMT,  $d(r) =2-2r$ for this channel, is thus an upper bound for the original setup. This completes the justification of the converse.

\subsubsection*{Achievability}
We next present a coding scheme that attains the DMT in~\eqref{eq:DMT_ex}.
We first split each message $\rvw_{k,1}$ into two equal sized messages $\rvw_{k,1} = (\rvw_{k,1}^{1}, \rvw_{k,1}^{2})$, where each sub-message is of rate $R_0 = R/4$.  Thus we can assume that both $\rvw_{k,1}^1$ and $\rvw_{k,1}^2$ are independent and sampled uniformly from $\cJ_M = \{1,2,\ldots, 2^{MR/4}\}$.  We do not split the messages $\rvw_{k,2}$ and assume that it is sampled uniformly from $\cI_M = \{1,2,\ldots, 2^{MR/2}\}$. We sample three Gaussian codebooks as follows:
\begin{itemize}
\item The codebook $\cC_{A}$ consisting of $2^{3MR_0}$ codewords $\rvx_A^{M/2}$ sampled i.i.d. from $\CN(0,\rho)$. Each pair $(\rvw_{k,1}^1, \rvw_{k-1,2})$ is mapped to a unique codeword $\rvbx_A(\rvw_{k,1}^1, \rvw_{k-1,2})$ i.e.,
\begin{align}
\cC_A = \left\{ \rvbx_A(\rvw_{k,1}^1, \rvw_{k-1,2})\right\}_{\rvw_{k,1}^1 \in \cJ_M, \rvw_{k-1,2} \in \cI_M}
\end{align}
\item The codebook $\cC_{B}$ consisting of $2^{MR_0}$ codewords $\rvx_{B}^{M/2}$ sampled i.i.d. from $\CN(0, \rho)$. Each message $\rvw_{k,1}^2$ is mapped to a unique codeword $\rvbx_{B}(\rvw_{k,1}^2)$ i.e.,
\begin{align}
\cC_B = \left\{\rvbx_B(\rvw_{k,1}^2)\right\}_{\rvw_{k,1}^2 \in \cJ_M}
\end{align}

\item The codebook $\cC_{C}$ consisting of $2^{2MR_0}$ codewords $\rvx_C^{M/2}$ sampled i.i.d. from $\CN(0, \rho^{1-\beta})$. Each message $\rvw_{k,2}$ is mapped to a unique codeword $\rvbx_{C}(\rvw_{k,2})$. 
\begin{align}
\cC_C = \left\{\rvbx_C(\rvw_{k,2})\right\}_{\rvw_{k,2} \in \cI_M}
\end{align}

We will select $\beta=r/2$. Note that the total power  in the second block is $\rho +\rho^{1-r/2} \doteq \rho$, since $\rho^{1-r/2} \ll \rho$. 
\begin{figure*}
\begin{center}
\includegraphics[width=\linewidth]{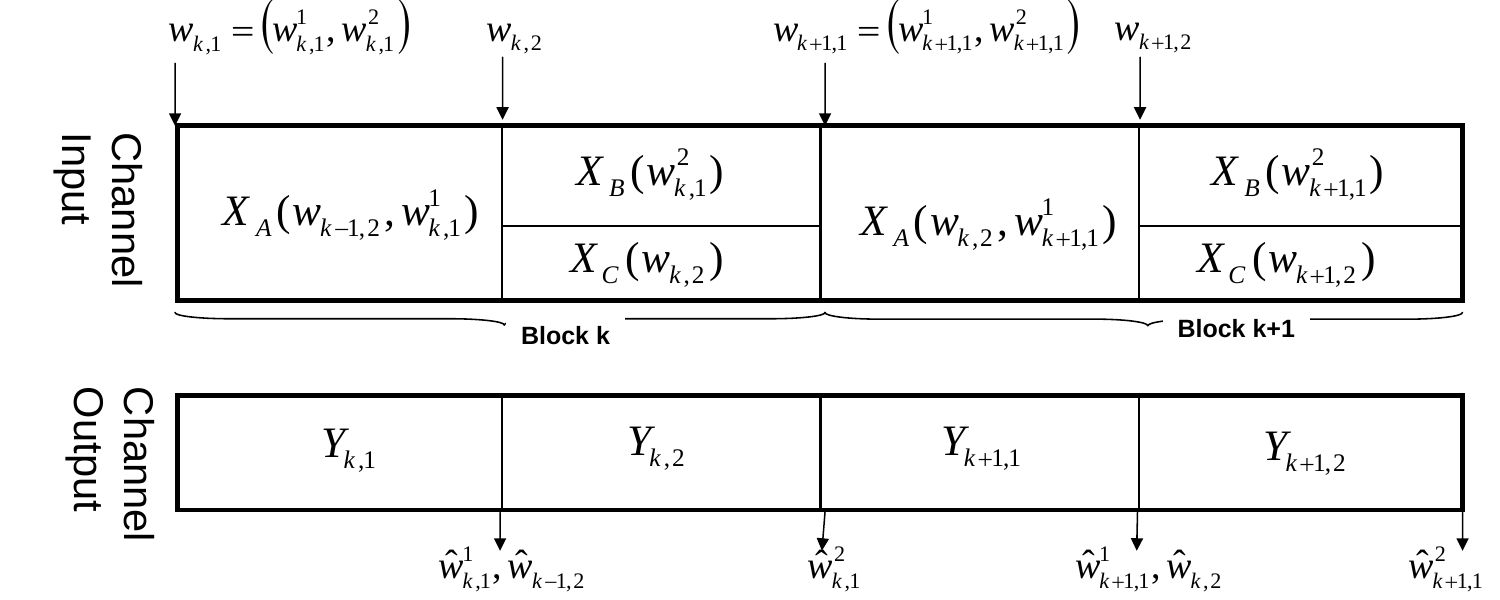}
\caption{Coding scheme for two messages per coherence block with $\Delta=0$. The first message $\rvw_{k,1}$ is split into two sub-messages $(\rvw_{k,1}^1, \rvw_{k,1}^2)$ of equal size, while the second message $\rvw_{k,2}$ is not split. In the first half of the coherence block, we transmit the codeword $\rvbx_A(\rvw_{k-1,2},\rvw_{k,1}^1)$ while in the second half of the coherence block we transmit the sum $\rvbx_B(\rvw_{k,1}^2) + \rvbx_C(\rvw_{k,2})$.} 
\end{center}
\end{figure*}

\end{itemize}

In coherence block $k,$ the transmitter transmits $\rvbx_{k,1} = \rvbx_{A}(\rvw_{k,1}^1, \rvw_{k-1,2})$ in the first half of the coherence block and $\rvbx_{k,2} = \rvbx_B(\rvw_{k,1}^2) + \rvbx_C(\rvw_{k,2})$ in the second half of the coherence block. The receiver observes $\rvby_{k,i} = \rvh_k \cdot~\rvbx_{k,i} + \rvbz_{k,i}$
for $i \in \{1, 2\}$. The decoding of the messages is as follows. 

\begin{itemize}
\item {In the second half of coherence block $k$, the receiver decodes $\rvw_{k,1}^2$ using $\rvby_{k,2}$, by treating $\rvbx_C(\rvw_{k,2})$ as additional noise. It searches for a unique message $\hat{\rvw}_{k,1}^2 \in \cJ_M$ such that}
$\left(\rvbx_B(\hat{\rvw}_{k,1}^2), \rvby_{k,2}\right) \in \cT_{\eps,2}^{M/2}.$ The error event $\cE_{k,1}^2$ denotes the event that $\hat{\rvw}_{k,1}^2 \neq {\rvw}_{k,1}^2$ and let $\cO_{k,1}^2$ denote the outage event:
\begin{align}
 \left\{\frac{1}{2}\log\left(1 + \frac{|\rvh_k|^2 \rho}{1 + |\rvh_k|^2 \rho^{1-\beta}}\right) \le \frac{r}{4} \log \rho\right\}.\label{eq:outage2}
\end{align}

\item After decoding $\hat{\rvw}_{k,1}^2$ the decoder subtracts $\rvbx_{B}(\hat{\rvw}_{k,1}^2)$ from $\rvby_{k, 2}$ i.e., $\tilde{\rvby}_{k,2}= \rvby_{k, 2} - \rvh_k \rvbx_B(\hat{\rvw}_{k,1}^2)$. The decoder uses the second half of coherence block $k$ and the first half of coherence block ${k+1}$ to decode the message pair $(\rvw_{k,2},\rvw_{k+1,1}^{1} )$. 
In particular, it searches for a pair $(\hat{\rvw}_{k,2},\hat{\rvw}_{k+1,1}^1)$ such that $(\rvbx_C(\hat{\rvw}_{k,2}), \tilde{\rvby}_{k,2}) \in \cT_{\eps,3}^{M/2}$ and $(\rvbx_A(\hat{\rvw}_{k+1,1}^1, \hat{\rvw}_{k,2}), \rvby_{k+1,1}) \in \cT_{\eps,4}^{M/2}$ are jointly typical.  

The error analysis involves two events,  $\cE_{k,2}$ and $\cE_{k+1,1}^{1}$, associated with the error in decoding $\rvw_{k,2}$ and $\rvw_{k+1,1}^1$ respectively.
In particular, let  $\cE_{k,2}= \{\hat{\rvw}_{k,2} \neq \rvw_{k,2}\}$  and the outage event $\cO_{k,2}$ be given by:
\begin{equation}
\bigg\{\frac{1}{2}\log\left(1 + \rho^{1-\beta} |\rvh_k|^2\right) + \frac{1}{2}\log(1+ \rho|\rvh_{k+1}|^2) \le \frac{3}{4}r\log\rho\bigg\}.
\label{eq:outage-k2}
\end{equation}
Similarly let $\cE_{k+1,1}^1 = \{\{\hat{\rvw}_{k,2} = \rvw_{k,2} \} \cap \{\hat{\rvw}_{k+1,1}^1 \neq \rvw_{k+1,1}^1\}\}$  be the event that the message $\rvw_{k,2}$
is decoded correctly, but an error occurs in the decoding of $\rvw_{k+1,1}^1$ and  let $\cO_{k,1}^1$ be the corresponding outage event: \begin{align}\left\{\frac{1}{2}\log(1+\rho |\rvh_k|^2) \le \frac{r}{4}\log\rho\right\}.\label{eq:outage1}\end{align}
\end{itemize}

It suffices to show that the error probability satisfies $\Pr\left(\cE_{k,1}^2 \cup \cE_{k,2} \cup \cE_{k+1,1}^1\right) \stackrel{\cdot}{\le} \rho^{-d(r)} +o_M(1)$ where $d(r)$ is defined in~\eqref{eq:DMT_ex} and $o_M(1)$ approaches zero as $M \rightarrow \infty$. By selecting $M$ sufficiently large (for each fixed $\rho$), the proposed DMT is then achievable.

We first consider the event $\cE_{k,1}^2 = \{\hat{\rvw}_{k,1}^2 \neq \rvw_{k,1}^2\}$ and use the following upper bound:
{\allowdisplaybreaks{\begin{align}
\Pr\left(\cE_{k,1}^2 \right) &\le \Pr\left(\cE_{k,1}^2~|~\cO_{k,1}^{2,c}\right) + \Pr(\cO_{k,1}^2) \label{eq:ek-21}
\end{align}}}
where $\cO_{k,1}^2$ is defined in~\eqref{eq:outage2}, with $\beta =r/2$ as:
\begin{align}
 \left\{\frac{1}{2}\log\left(1 + \frac{|\rvh_k|^2 \rho}{1 + |\rvh_k|^2 \rho^{1-r/2}}\right) \le \frac{r}{4} \log \rho\right\}.
\label{eq:outage22}\end{align} From standard arguments, the first term in~\eqref{eq:ek-21} decreases to zero as $M \rightarrow\infty$, and thus we only need to upper bound the second term. Letting $|\rvh_k|^2 = \rho^{-(1-\al)}$ we have that~\eqref{eq:outage22} is equivalent to
\begin{align}
\log\left(1 + \frac{\rho^{\al}}{1+\rho^{\al-r/2}}\right) \le \frac{r}{2}\log\rho
\end{align}
which in turn implies that $\alpha < \frac{r}{2}$ as $\rho \rightarrow \infty$. Thus we have that
\begin{align}
\Pr(\cO_{k,1}^2) \stackrel{\cdot}{\le} \rho^{-(1-r/2)}
\end{align}
and in turn
\begin{align}
\Pr(\cE_{k,1}^2) \stackrel{\cdot}{\le} \rho^{-(1-r/2)} + o_M(1) \label{eq:ek-22}
\end{align}
holds. 

Next we upper bound the probability of the event $\cE_{k,2} = \{\rvw_{k,2} \neq \hat{\rvw}_{k,2}\}$. We can express 
{\allowdisplaybreaks{\begin{align}
\Pr(\cE_{k,2}) &=\Pr(\cE_{k,2}|\cO_{k,2}^c) + \Pr(\cO_{k,2})
\end{align}}}
where recall that the event $\cO_{k,2}$ in~\eqref{eq:outage-k2} is defined, with $\beta=r/2$ as:
\begin{align}
\bigg\{\frac{1}{2}\log\left(1 + \rho^{1-r/2} |\rvh_k|^2\right) + \frac{1}{2}\log(1+ \rho|\rvh_{k+1}|^2) \le \frac{3}{4}r\log\rho\bigg\}. \label{eq:outage-k22}
\end{align} Note that whenever $\{\rvw_{k,2} \neq \hat{\rvw}_{k,2}\}$, we have that  
$(\rvbx_C(\hat{\rvw}_{k,2}), \tilde{\rvby}_{k,2})$ are mutually independent and furthermore
$(\rvbx_A(\hat{\rvw}_{k+1,1}^1, \hat{\rvw}_{k,2}), \rvby_{k+1,1})$ are mutually independent. 
It can be shown through a standard union bound argument that $\Pr(\cE_{k,2}|\cO_{k,2}^c)$ vanishes to zero as $M\rightarrow\infty$. 
To upper bound $\cO_{k,2}$ we let $|\rvh_k|^2 = \rho^{-(1-\al_1)}$ and $|\rvh_{k+1}|^2 = \rho^{-(1-\al_2)}$ and note that~\eqref{eq:outage-k22} reduces to:
\begin{align}
\log(1+ \rho^{\al_1-r/2}) + \log(1+\rho^{\al_2}) \le \frac{3r}{2}\log\rho.
\end{align}
The associated DMT is given by
\begin{align}
d_2(r) &= \min_{(\al_1, \al_2)\in \cA} (1-\al_1)^+ + (1-\al_2)^+
\end{align}where $\cA = \{(\al_1, \al_2)\ge 0 :  (\al_1-r/2)^+ + \al_2 \le 3r/2\}$ and $(v)^+$ equals $0$ is $v <0$.
It can be deduced that $d_2(r) = 2-2r$ and thus
\begin{align}
\Pr(\cO_{k,2})\stackrel{\cdot}{\le} \rho^{-2(1-r)}
\end{align}holds. Thus we have that
\begin{align}
\Pr(\cE_{k,2}) &\stackrel{\cdot}{\le} o_M(1) + \rho^{-2(1-r)} \label{eq:ek-23}
\end{align}holds.

Finally we consider the event $\cE_{k+1,1}^1 = \{\{\hat{\rvw}_{k,2} =
\rvw_{k,2}\} \cap \{\hat{\rvw}_{k+1,1}^1 \neq \rvw_{k+1,1}^1\}\}$
which corresponds to an error in message estimate
$\hat{\rvw}_{k+1,1}^1$ in the first half of the coherence block. Under
this event the codeword $\rvbx_C(\hat{\rvw}_{k,2})$ is decoded
correctly however the pair $(\rvbx_A(\hat{\rvw}_{k+1,1}^1,
\hat{\rvw}_{k,2}), \rvby_{k+1,1})$ is mutually independent. Using
$\cO_{k,1}^1$ be defined in~\eqref{eq:outage1}, we can express
\begin{align}
\Pr\left(\cE_{k,1}^1\right) &\le \Pr(\cO_{k,1}^1) +
\Pr\left(\cE_{k,1}^1~\big|~\cO_{k,1}^{1,c}\right).
\end{align}
It follows from standard arguments that $\Pr(\cO_{k,1}^1) \doteq \rho^{-(1-r/2)}$ and furthermore
$\Pr\left(\cE_{k,1}^1~\big|~\cO_{k,1}^{1,c}\right)$ vanishes to zero as $M \rightarrow \infty$.
Thus we have that
\begin{align}
\Pr\left(\cE_{k,1}^1\right) &\stackrel{\cdot}{\le} \rho^{-(1-r/2)} + o_M(1). \label{eq:e1-bnd}
\end{align}

This completes our achievability.

\color{black}

\subsection{Unknown Offset}
\label{subsec:unknown}
We consider the case when either $\Delta = 0$ or $\Delta = 1/2$, but
when the actual value of $\Delta$ is not known to the
transmitter. Such a setup applies when simultaneously transmitting to
two users whose coherence blocks are offset by $M/2$ symbols. The
following result shows that we cannot have a universal coding scheme
oblivious of $\Delta$ that achieves the same DMT.
\begin{prop}
 \label{prop:exam-2}
Consider the SISO channel model with two messages in each coherence
block as in Prop.~\ref{prop:exam-1}. Assume that either $\Delta =0$ or
$\Delta= 1/2$, but where the actual value of $\Delta$ is known only to
the receiver.  The DMT for this setup equals $d(r) = 1-r$.
\end{prop}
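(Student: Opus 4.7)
The plan splits into achievability and converse. For achievability $d(r)\ge 1-r$, I would use a universal time-sharing scheme that is oblivious to $\Delta$: on each message arrival at time $t_\rvw$, the encoder transmits an i.i.d.\ complex Gaussian codeword of rate $R/2$ into the $M/2$ channel symbols $[t_\rvw, t_\rvw + M/2 - 1]$ at full power $\rho$. One checks that for both $\Delta\in\{0,1/2\}$ this window lies entirely within a single coherence block, and that distinct messages occupy disjoint windows. Each message then sees a single quasi-static SISO link conveying $MR/2$ bits over $M/2$ symbols, i.e., multiplexing gain $r$, so Theorem~\ref{thm:DMT_Zheng} yields diversity $d_1(r)=1-r$ uniformly over $\Delta$.

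For the converse $d(r)\le 1-r$, I would adapt the outage-amplification argument of Section~\ref{sec:Converse}. Suppose for contradiction that a universal code achieves $d(r)=1-r+2\delta$ for some $\delta>0$, so the per-message error probability satisfies $\Pr(\hat{\rvw}\ne\rvw)\stackrel{\cdot}{\le}\rho^{-(1-r+2\delta)}$ in both scenarios. Since the encoder is oblivious to $\Delta$, in each scenario there is one ``single-block'' message per coherence block ($\rvw_{k,1}$ in $S_0$, $\rvw_{k-1,2}$ in $S_1$). Define the bad block event $\cE_k=\{|\rvh_k|^2\le\rho^{-(1-r+\delta)}\}$; its probability $\rho^{-(1-r+\delta)}$ strictly exceeds the assumed per-message error, so Bayes' rule gives vanishing conditional error on $\cE_k$. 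Iterating the amplification template of Figs.~\ref{fig:amp1}--\ref{fig:amp2} over the joint event $\cE_\otimes^{N+1}=\bigcap_{k=0}^N\cE_k$, the decoder sequentially recovers the single-block messages and, after subtracting their codewords, the straddling messages by deadlines before $(N+1)M-1$. Fano's inequality then bounds the total mutual information available in these $N+1$ weak blocks by $(N+1)M(r-\delta)\log\rho$, which for $N>r/(2\delta)-1$ falls short of the $(2N+1)M(r/2)\log\rho$ bits required by the $2N+1$ decoded messages, producing the desired contradiction.

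The main obstacle is extending the single-shot amplification to the straddling messages. The bad event for a single-block message has probability exceeding the assumed per-message error, so Bayes yields conditional decoding success. A straddling message however uses two half-blocks, and its natural bad event $\{|\rvh_k|^2,|\rvh_{k+1}|^2\le\rho^{-(1-r+\delta)}\}$ has probability $\rho^{-2(1-r+\delta)}<\rho^{-(1-r+2\delta)}$ for $r<1$, so the single-shot Bayes step does not apply. I would bootstrap: after the amplified decoding of single-block messages and subtraction of their codewords, the residual channel's mutual information suffices to decode the straddlers on the all-bad event. Verifying that this combined argument produces the $1-r$ bound, rather than the weaker single-scenario bound $1-r/2$ of Proposition~\ref{prop:exam-1} (which the universality constraint tightens), is the crux of the converse.
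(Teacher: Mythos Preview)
Your achievability is correct and matches the paper's argument exactly.

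Your converse, however, has a real gap. You correctly diagnose the obstacle: within a single scenario the straddling messages see two fading gains, so their natural bad event has probability $\rho^{-2(1-r+\delta)}$, which is \emph{smaller} than the assumed error $\rho^{-(1-r+2\delta)}$, and the Bayes step fails. Your proposed bootstrap---decode the single-block messages, ``subtract their codewords,'' then decode the straddlers---does not work for two reasons. First, for a general encoder the transmitted signal in any half-block is a joint function of all past messages, so there is no per-message codeword to subtract; the informal amplification picture of Figs.~\ref{fig:amp1}--\ref{fig:amp2} only makes sense once you pass to the formal Fano version conditioning on past messages. Second, if you run the formal Fano chain in a single scenario, the straddling messages contribute a $P_\delta^2$ prefactor while the single-block messages contribute $P_\delta$; when you average, the straddlers are negligible and you are left bounding $\sum_k I(\rvw_{k,1};\rvbY\mid\ldots)$ against $N\cdot MR/2$, which only forces $d(r)\le 1-r/2$---exactly the Prop.~\ref{prop:exam-1} bound you were trying to beat.

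The paper's resolution is to formalize the universality constraint as a \emph{multicast}: the same encoder output is fed to two virtual receivers that share the fading sequence $\{\rvh_k\}$ but with coherence boundaries shifted by $M/2$ (receiver~1 sees $\Delta=0$, receiver~2 sees $\Delta=1/2$). The point is that every message is single-block for \emph{some} receiver: $\rvw_{k,1}$ for receiver~1, $\rvw_{k,2}$ for receiver~2. Applying Fano to each message at the receiver for which it is single-block gives a $P_\delta$ prefactor for \emph{all} messages. Summing via the chain rule now produces $I(\rvw_0^{N-1};\rvbY_0^N,\rvbV_0^N\mid\rvh_0^{N+1}\in\cH_\delta^{N+2})$ in the numerator, and the crucial observation is that this joint mutual information over both receivers is still only $(N+1)M(r-\delta)\log\rho+O(N)$ on the bad event---it does not double, because the two receivers see the same fading gains (just reindexed). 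Comparing against the full message rate $NMr\log\rho$ then gives the contradiction for large $N$. This multicast device is the missing piece in your plan; without it you cannot get below $1-r/2$.
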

 \begin{new-proof}

 The achievability is straightforward. Each message $\rvw_{k,j}$ is
 mapped to a codeword of length $M/2$ of a Gaussian codebook and
 transmitted immediately.  Since each message is of rate
 $\frac{r}{2}\log\rho$ a DMT of $d(r) = 1-r$ is achievable.
 
 For the converse, we consider a multicast setup with two
 receivers. In coherence block $k$ the transmitter transmits
 $\rvbx_{k,1}$ in the first half of the coherence block and transmits
 $\rvbx_{k,2}$ in the second half, i.e., $\rvbx_k =
 [\rvbx_{k,1}~\rvbx_{k,2}],$ where both $\rvbx_{k,1}, \rvbx_{k,2}\in
 {\mathbb C}^{M/2}$.  Receiver $1$ observes $\rvby_k =
 [\rvby_{k,1}~\rvby_{k,2}]$ in coherence block $k$ as follows:
 {\allowdisplaybreaks{\begin{align}
 \rvby_{k, 1} &= \rvh_k \rvbx_{k,1} + \rvbn_{k,1}, \\
 \rvby_{k,2} &= \rvh_k \rvbx_{k,2} + \rvbn_{k,2}.
 \end{align}
 Receiver $2$ observes $\rvbv_k = [\rvbv_{k,1}~\rvbv_{k,2}]$ in coherence block $k$ as follows:
 \begin{align}
 \rvbv_{k,1} &= \rvh_{k}\rvbx_{k,1} + \rvbz_{k,1}, \\
 \rvbv_{k,2} &=\rvh_{k+1}\rvbx_{k,2} + \rvbz_{k,2}.
 \end{align}}}
 The noise variables $\rvbn_{k,j}$ and $\rvbz_{k,j}$ have
 i.i.d.\ $\CN(0,1)$ entries.  For both receivers we have $T=1$ and
 message deadlines are shown in Fig.~\ref{fig:multiple}.  Note that
 the duration of $\rvw_{k,1}$ spans only one coherence block for
 receiver $1$, but $\rvw_{k,2}$ spans two coherence blocks.  Likewise
 $\rvw_{k,2}$ spans only one coherence block for receiver $2$, but
 $\rvw_{k,1}$ spans two blocks.  We show that under this constraint
 the maximum possible DMT is $d(r)=1-r$
 
 We begin by considering Fano's inequality for receiver $1$ for message $\rvw_{0,1}$ and rate $\frac{Mr}{2}\log\rho$:
 \begin{align}
 \Pr(\cE_{0,1}|\rvh_0=h_0) \ge 1- \frac{2}{Mr\log\rho} -\frac{2I(\rvw_{0,1}; \rvby_1|\rvh_0=h_0)}{Mr\log\rho}.
 \end{align}
 Ignoring the second term, which goes to zero as $M\rightarrow\infty$,
 and using the same sequence of steps leading to~\eqref{eq.condFanos} we have with $P_\del \doteq \rho^{-(1-r+\del)}$
{\allowdisplaybreaks{ \begin{align}
 \Pr(\cE_{0,1}) &\ge P_\delta\left(1-\frac{2I(\rvw_{0,1};\rvby_0 |\rvh_0,\rvh_0 \in \cH_\del)}{Mr\log\rho}\right)\\
 &= P_\delta\left(1-\frac{2I(\rvw_{0,1};\rvby_0 |\rvh_0^{N+1},\rvh_0^{N+1} \in \cH_\del^{N+2})}{Mr\log\rho}\right)\label{eq:chan-indep2}\\
 &\ge P_\del \left(1-\frac{2I(\rvw_{0,1};\rvby_0^{N}, \rvbv_0^{N} |\rvh_0^{N+1},\rvh_0^{N+1} \in \cH_\del^{N+2})}{Mr\log\rho}\right)\label{eq:w0term}
 \end{align}}}
 where~\eqref{eq:chan-indep2} follows from the fact that ${\rvh_1^{N+1}}$ is independent of $(\rvw_0, \rvby_0,\rvh_0)$.

 Similarly, applying Fano's inequality to receiver $2$ for message $\rvw_{0,2}$ we have
 \begin{align}
&\Pr(\cE_{0,2}) \ge P_\del\left(1-\frac{2I(\rvw_{0,2};\rvbv_{0,2},\rvbv_{1,1}|\rvw_{0,1},\rvh_1, \rvh_1 \in \cH_\del)}{Mr\log\rho}\right)\\
 &=P_\del\left(1-\frac{2I(\rvw_{0,2};\rvbv_{0,2},\rvbv_{1,1}|\rvw_{0,1}, \rvh_0^{N+1},\rvh_0^{N+1} \in \cH_\del^{N+2})}{Mr\log\rho}\right)\\
 &\ge P_\del\left(1-\frac{2I(\rvw_{0,2};\rvby_0^{N},\rvbv_0^{N}|\rvw_{0,1}, \rvh_0^{N+1},\rvh_0^{N+1} \in \cH_\del^{N+2})}{Mr\log\rho}\right).
 \end{align}

Likewise we can show that for each $k \le N-1$
\begin{align}
&\Pr(\cE_{k,1})\ge\notag\\ & P_\del\left(1- \frac{2 I(\rvw_{k,1};\rvby_0^{N}, \rvbv_0^{N} |\rvh_0^{N+1},\rvh_0^{N+1}\in \cH_\del^{N+2}, \rvw_0^{k-1})}{Mr\log\rho}\right)\\
&\Pr(\cE_{k,2}) \ge \notag \\ &P_\del\left(1-\frac{2I(\rvw_{k,2};\rvby_0^{N}, \rvbv_0^{N}|\rvh_0^{N+1},\rvh_0^{N+1} \in \cH_\del^{N+2},\rvw_0^{k-1},\rvw_{k,1})}{Mr\log\rho}\right).
\end{align}

Thus we have that
{\allowdisplaybreaks{\begin{align}
&\max_{0\le k \le N-1} \max\left\{\Pr(\cE_{k,1}),\Pr(\cE_{k,2}))\right\} \notag\\ &\ge \frac{1}{2N} \sum_{k=0}^{N-1}\left\{\Pr(\cE_{k,1}) + \Pr(\cE_{k,2})\right\}\\
&\ge P_\del\left(1 - \frac{I(\rvw_0^{N-1};\rvby_0^N, \rvbv_0^N|\rvh_0^{N+1}\in \cH_\del^{N+1})}{NMr\log\rho}\right)\\
&\ge P_\del\left(1 - \frac{I(\rvbx_0^{N};\rvby_0^N, \rvbv_0^N|\rvh_0^{N+1}\in \cH_\del^{N+1})}{NMr\log\rho}\right)\\
&\ge P_\del \times \notag\\ &\bigg(\!\!1 \!\!- \!\!\frac{\sum_{k=0}^N I(\rvbx_{k,1};\rvby_{k,1}, \rvbv_{k,1}|\rvh_k) + I(\rvbx_{k,2};\rvby_{k,2},\rvbv_{k,2}|\rvh_k, \rvh_{k+1})}{NMr\log\rho}\!\!\bigg)\\
&\ge \rho^{-(1-r+\del)} \left(1- \frac{N+1 + (N+1)(r-\del)\log\rho}{Nr\log\rho}\right).\label{eq:multicast-lb}
\end{align}}}
The steps leading to~\eqref{eq:multicast-lb} are similar
to~\eqref{eq:ChGain2} and hence are not elaborated.  For $N$
sufficiently large the expression in the brackets
in~\eqref{eq:multicast-lb} is positive. This establishes that $d(r)
\ge 1-r+\del$ must hold. Since $\del>0$ is arbitrary this concludes
the converse in Prop.~\ref{prop:exam-2}. \end{new-proof}
 
We conclude this section with the following remark.  When there are
multiple messages that arrive at equal intervals in each coherence
block, different messages observe different channel
conditions. Prop.~\ref{prop:exam-1} shows that coding schemes that
exploit this asymmetry between messages can improve the DMT.  On the
other hand such schemes depend crucially on where the messages arrive
in each block. If such information is not available the DMT is, in
general, smaller, as is established in Prop.~\ref{prop:exam-2}.

\section{Conclusions}
\label{sec.conclusion}

In this paper we study the problem of delay-constrained streaming over
a block fading channel.  We establish the diversity multiplexing
tradeoff when there is one message arriving in each coherence
block. The converse is based on a novel ``outage-amplification''
argument that builds up a contradiction, over a sufficiently large
duration, if we assume a larger DMT.  We propose two coding schemes
for achieving the optimal DMT. The first uses an interleaving scheme
that reduces the system to a set of parallel independent channels. The
advantage of this scheme is its simplicity.  The disadvantage is that
the playback deadline $T$ must be known in advance.  The second scheme
pairs a sequential decoder with a tree code.  This scheme also attains
that DMT, and in a delay-universal fashion, but is more computationally
complex and appears to require sufficiently long coherence blocks.
Finally, we discuss some extensions when multiple messages arrive in
each coherence block.
 
The fundamental limits of delay-constrained streaming over fading
channels are not well understood in general.  We hope that the
techniques developed in this work can serve as a useful starting point
for other investigations.

\appendices
\section{Proof of the Lower Bound in~\eqref{eq.condFanos}}
\label{app:outage}

In this Appendix we incorporate the effect of outage into our lower bound.
We continue from~(\ref{eq.pointwise}), dividing the channel
realizations into sets in which the suffix is in outage, and when it is not, and dropping the latter.  Thus,
\begin{equation*}
\Pr[\cE_k]  \geq \sum_{\bH_0^\Tk :  \bH_k^\Tk \in  \cH_\del^T} \Pr[\rvbH_0^{\Tk} = \bH_0^{\Tk}] \Pr[\cE_k| \rvbH_0^{\Tk} = \bH_0^{\Tk}].
\end{equation*}
Applying~(\ref{eq.fanos}) and marginalizing out over the prefixes
$\{\bH_0^{k-1}\}$ we get
\begin{multline}
\Pr[\cE_k]  \geq   \sum_{\bH_k^\Tk :  \bH_k^\Tk \in  \cH_\del^T}  \Pr[\rvbH_k^{\Tk} = \bH_k^{\Tk}] \times \\ \left(1 -
  \frac{1}{Mr\log\rho} - \frac{I(\rvw_k;  \rvbY_k^{\Tk}| \rvw_0^{k-1}, \rvbH_k^{\Tk}= \bH_k^{\Tk})}{Mr\log\rho}\right).
\label{eq:Ek-lb}
\end{multline}

We can further express the right hand side of~\eqref{eq:Ek-lb} as follows. Note that
\begin{align}
&\sum_{\bH_k^\Tk :  \bH_k^\Tk \in  \cH_\del^T}  \Pr[\rvbH_k^{\Tk} = \bH_k^{\Tk}] \left(1 -
  \frac{1}{Mr\log\rho}\right) \\&= \left(1 -
  \frac{1}{Mr\log\rho}\right)\sum_{\bH_k^\Tk :  \bH_k^\Tk \in  \cH_\del^T}  \Pr[\rvbH_k^{\Tk}]\\
  &=  \left(1 -
  \frac{1}{Mr\log\rho}\right) \Pr[\rvbH_k^\Tk \in  \cH_\del^T] \label{eq:Ek-lb-1}
\end{align}

For the second term in~\eqref{eq:Ek-lb}, note that
\allowdisplaybreaks{\begin{align}
&\sum_{\bH_k^\Tk :  \bH_k^\Tk \in  \cH_\del^T} \Pr[\rvbH_k^{\Tk} = \bH_k^{\Tk}] I(\rvw_k;  \rvbY_k^{\Tk}| \rvw_0^{k-1}, \rvbH_k^{\Tk}= \bH_k^{\Tk}) \\
&=\sum_{\bH_k^\Tk :  \bH_k^\Tk \in  \cH_\del^T} \Pr[\rvbH_k^{\Tk} = \bH_k^{\Tk}] \times \notag \\ &\qquad\qquad I(\rvw_k;  \rvbY_k^{\Tk}| \rvw_0^{k-1}, \rvbH_k^{\Tk}= \bH_k^{\Tk}, \rvbH_k^\Tk \in \cH_\del^T) \label{eq:mut-inf-cond}\\
&=\Pr[\rvbH_k^\Tk \in \cH_\del^T]\sum_{\bH_k^\Tk :  \bH_k^\Tk \in  \cH_\del^T} \Pr[\rvbH_k^{\Tk} = \bH_k^{\Tk} | \rvbH_k^\Tk \in \cH_\del^T]\times\notag\\ & \qquad\qquad I(\rvw_k;  \rvbY_k^{\Tk}| \rvw_0^{k-1}, \rvbH_k^{\Tk}= \bH_k^{\Tk}, \rvbH_k^\Tk \in \cH_\del^T) \label{eq:bayes-app}\\
&=\Pr[\rvbH_k^\Tk \in \cH_\del^T] I(\rvw_k;  \rvbY_k^{\Tk}| \rvw_0^{k-1}, \rvbH_k^{\Tk}, \rvbH_k^\Tk \in \cH_\del^T), \label{eq:avg-mut-inf}
\end{align}}
where in~\eqref{eq:mut-inf-cond}, we use the fact that the indicator random variable ${\rvbH_k^\Tk \in \cH_\del^T}$ is a deterministic function of $\rvbH_k^\Tk$, and hence can be added as in~\eqref{eq:mut-inf-cond}. In~\eqref{eq:bayes-app}, we note that for each $\bH_k^\Tk \in  \cH_\del^T$ we have:
\begin{align}
&\Pr[\rvbH_k^{\Tk} = \bH_k^{\Tk} | \rvbH_k^\Tk \in \cH_\del^T] \\&= \frac{\Pr[\rvbH_k^{\Tk} = \bH_k^{\Tk}] \Pr[\rvbH_k^\Tk \in \cH_\del^T|\rvbH_k^{\Tk} = \bH_k^{\Tk}]}{\Pr[\rvbH_k^\Tk \in \cH_\del^T]} \\
&=\frac{\Pr[\rvbH_k^{\Tk} = \bH_k^{\Tk}]}{\Pr[\rvbH_k^\Tk \in \cH_\del^T]}.\label{eq:bayes}
\end{align}
Thus substituting~\eqref{eq:bayes} in~\eqref{eq:mut-inf-cond} we have that~\eqref{eq:bayes-app} follows.

Substituting~\eqref{eq:Ek-lb-1} and~\eqref{eq:avg-mut-inf} into~\eqref{eq:Ek-lb}, we have
\begin{align}
\Pr[\cE_k] & \ge \Pr[\rvbH_k^\Tk \in \cH_\del^T]\times \notag\\ &\left(1 -
  \frac{1}{Mr\log\rho} - \frac{I(\rvw_k;  \rvbY_k^{\Tk}| \rvw_0^{k-1}, \rvbH_k^{\Tk}, \rvbH_k^\Tk \in \cH_\del^T)}{Mr\log\rho} \right).\label{eq.condFanos2}
\end{align}
This completes the justification of~\eqref{eq.condFanos}.

\section{Proof of~\eqref{eq:PrE_A_Bound}.}
\label{app:PrE_A_proof}

Our proof is based on the Chernoff-Cramer theorem of large deviations stated below.
\begin{thm}
Suppose that $\rvx_1,\ldots, \rvx_N$ are i.i.d.\ random variables with a rate function $f_\rvx(\cdot)$
defined as
\begin{align}
f_\rvx(t) = \sup_{\theta}\bigg\{\theta \cdot t - \log E_\rvx\left[\exp(\theta \cdot \rvx)\right]\bigg\},
\label{eq:rate-f}
\end{align}
and let $M_n = \frac{1}{n}\sum_{i=1}^n \rvx_i$. Then there exists a constant $N>0$ such that for all $n\ge N$ 
\begin{align}
\Pr(M_n \ge t) \le e^{-n f_\rvx(t)}. \label{cramer-bnd}
\end{align}
\hfill$\Box$
\label{thm:chernoff}
\end{thm}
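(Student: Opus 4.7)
The plan is to apply the classical Chernoff--Cramér trick: exponentiate the sum, invoke Markov's inequality, factor the moment generating function using the i.i.d.\ assumption, and then optimize over the free parameter $\theta$ to recover the rate function defined in~\eqref{eq:rate-f}.

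Concretely, for any $\theta \ge 0$ I would rewrite the event $\{M_n \ge t\}$ as $\{e^{\theta \sum_i \rvx_i} \ge e^{n\theta t}\}$ and apply Markov's inequality to the nonnegative random variable on the left. Pulling out the deterministic factor and factoring the expectation via independence, the right-hand side becomes exactly $\exp\!\bigl(-n[\theta t - \log E[e^{\theta \rvx}]]\bigr)$. Minimizing in $\theta \ge 0$ is equivalent to taking the supremum of $\theta t - \log E[e^{\theta \rvx}]$, which by definition recovers $e^{-n f_\rvx(t)}$. This is the full content of the bound and requires no asymptotic analysis.

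The only delicate point, and the place where I would need to be careful, is that the sup in~\eqref{eq:rate-f} ranges over all $\theta \in \reals$ while the Markov-based argument only delivers the bound for $\theta \ge 0$. This is reconciled by convexity of the cumulant generating function $\Lambda(\theta) = \log E[e^{\theta \rvx}]$: since $\Lambda'(0) = E[\rvx]$, the concave map $\theta \mapsto \theta t - \Lambda(\theta)$ attains its maximum at some $\theta^\star \ge 0$ whenever $t \ge E[\rvx]$, so the two sups agree in the regime of interest. When $t < E[\rvx]$ the weak law of large numbers forces $\Pr(M_n \ge t) \to 1$, but then $f_\rvx(t) = 0$ (attained at $\theta = 0$) and the claim is vacuous. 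The hardest part of a fully formal write-up is handling distributions whose MGF is infinite outside a bounded interval; the standard remedy is to restrict the sup to the effective domain of $\Lambda$ and note that the supremum is approached in its interior. I also note that the qualifier ``there exists $N > 0$'' is not actually required for this upper-bound direction, which in fact holds for every $n \ge 1$; it appears to be present only because that is the form in which the bound will be invoked when proving~\eqref{eq:PrE_A_Bound}.
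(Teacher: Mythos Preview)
Your argument is correct and is precisely the standard Chernoff--Cram\'er derivation; your remarks about the restriction to $\theta \ge 0$ and the vacuity of the bound for $t < E[\rvx]$ are also accurate, as is your observation that the bound in fact holds for every $n \ge 1$. Note, however, that the paper does not supply its own proof of this statement: Theorem~\ref{thm:chernoff} is merely quoted as the classical Chernoff--Cram\'er theorem and then invoked to obtain~\eqref{eq:PrE_A_Bound}, so there is nothing further to compare against.
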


Recall that $\cA_{l,l'}$ is the event that the true codeword is not
jointly typical with the received sequence.  To upper bound the probability 
we can ignore the marginal
typicality constraints and use
\begin{align}
&\Pr(\cA_{l,l'}) \leq \notag\\ &\Pr\left( \bigg|
  \frac{\sum_{k=l}^{l'}[- \log p_{\rvbX_k, \rvbY_k}(\rvbX_k, \rvbY_k)-
    h(p_{\rvbX_k, \rvbY_k})]}{M(l' - l + 1)}\bigg| >
  \eps\right).
\label{eq:MsgSeq_Atypical_2}
\end{align}
Note that as
$\rvbY_k = \rvbH_k \rvbX_k + \rvbZ_k$, the $\rvbH_k$ are known to the
decoder, and the noise sequence $\{\rvbZ_k\}$ is independent,
\begin{align}
p_{\rvbX_k, \rvbY_k}(\svbX_k, \svbY_k) &= p_{\rvbX_k}(\svbX_k)
p_{\rvbY_k| \rvbX_k}(\svbY_k| \svbX_k)\\ &= p_{\rvbX_k}(\svbX_k)
p_{\rvbZ_k}(\svbY_k - \bH_k \cdot \svbX_k)\\ &= p_{\rvbX}(\svbX_k)
p_{\rvbZ_k}(\svbZ),
\end{align} 
where the last equality holds since the codewords are sampled
i.i.d.\ and the noise is also i.i.d.  Thus
$h(p_{\rvbX_k,\rvbY_k})= h(p_\rvbX) + h(p_\rvbZ)$.  And so
\begin{align}
& \bigg|\sum_{k=l}^{l'}[ -\log p_{\rvbX_k, \rvbY_k}(\rvbX_k, \rvbY_k)-
    h(p_{\rvbX_k, \rvbY_k})]\bigg|\\ & = 
\bigg|\sum_{k=l}^{l'}[-\log p_{\rvbX}(\rvbX_k)-\log
  p_{\rvbZ}(\rvbZ_k) - h(p_{\rvbX}) -
  h(p_{\rvbZ})]\bigg|\notag\\ &\le \bigg|\sum_{k=l}^{l'}[-\log
  p_{\rvbX}(\rvbX_k) - h(p_{\rvbX})\bigg| + 
  \bigg|\sum_{k=l}^{l'} [-\log p_{\rvbZ}(\rvbZ_k) - 
  h(p_{\rvbZ})]\bigg|\label{eq:triangular_ineq}
\end{align}
where the last step follows from the triangle
inequality. Substituting~\eqref{eq:triangular_ineq}
into~\eqref{eq:MsgSeq_Atypical_2} and using using the union bound we
have
\begin{align}
\Pr(\cA_{l,l'}) \le \Pr(\cA_{l,l'}^{X})+ \Pr(\cA_{l,l'}^{Z})
\end{align}
where we define
\begin{align}
\cA_{l,l'}^X = \left\{ \svbX_l^{l'} : \bigg|\frac{\sum_{k=l}^{l'}[-\log
  p_{\rvbX}(\svbX_k) - h(p_{\rvbX})]}{M(l'-l+1)}\bigg| \ge
\eps\right\},\\ \cA_{l,l'}^Z= \left\{\svbZ^l_{l'} :
\bigg|\frac{\sum_{k=l}^{l'}[-\log p_{\rvbZ}(\svbZ_k) -
  h(p_{\rvbZ})}{M(l'-l+1)}\bigg|\ge \eps\right\}.
\end{align} 
Note that $\rvbX_k $ is a sequence of $M$ i.i.d.\ random vectors each
sampled from $\CN(0, \frac{\rho}{\Nt}\bI)$ and $E[-\log
  p_{\rvbX}(\rvbX_k)] = h(p_{\rvbX})$.  Similarly, $E[-\log
  p_{\rvbZ}(\rvbZ_k)] = h(p_{\rvbZ})$.  Then using Theorem~\ref{thm:chernoff}, there exist functions $f_X(\eps)$ and
$f_Z(\eps)$  such that for sufficiently large  $N = M(l'-l+1)$
\begin{align*}
\Pr(\cA_{l,l'}^X) \le \exp\{-M(l'-l+1)f_X(\eps)\},\\
\Pr(\cA_{l,l'}^Z) \le \exp\{-M(l'-l+1)f_Z(\eps)\}.\\
\end{align*}
Furthermore by directly using~\eqref{eq:rate-f} we can show that $f_X(\eps)>0$ and $f_Y(\eps)>0$.
Setting $f(\eps) = \max(f_X(\eps),f_Z(\eps))$
establishes~\eqref{eq:PrE_A_Bound}.


\vspace{-4em}
\begin{IEEEbiographynophoto}{Ashish Khisti}
Ashish Khisti is an assistant professor in the Electrical and Computer
Engineering (ECE) department at the University of Toronto, Toronto,
Ontario Canada. He received his BASc degree in Engineering Sciences
from University of Toronto and his S.M and Ph.D. Degrees from the
Massachusetts Institute of Technology (MIT), Cambridge, MA, USA. His
research interests span the areas of information theory, wireless
physical layer security and streaming in multimedia communication
systems. At the University of Toronto, he heads the signals,
multimedia and security laboratory. For his graduate studies he was a
recipient of the NSERC postgraduate fellowship, HP/MIT alliance
fellowship, Harold H. Hazen Teaching award and the Morris Joseph Levin
Masterworks award.
\end{IEEEbiographynophoto}

\begin{IEEEbiographynophoto}{Stark C.~Draper}(S'99-M'03) received the
M.S. and Ph.D. degrees in electrical engineering and computer science
from the Massachusetts Institute of Technology (MIT), and the B.S. and
B.A. degrees in electrical engineering and history, respectively, from
Stanford University.

He is an Associate Professor of Electrical and Computer Engineering at
the University of Toronto, Canada.  From 2007-2014 he was an Associate
Professor at the University of Wisconsin, Madison. Before moving to
the University of Wisconsin he was with the Mitsubishi Electric
Research Laboratories (MERL), Cambridge, MA.  He has held postdoctoral
positions in the Wireless Foundations, University of California,
Berkeley, and in the Information Processing Laboratory, University of
Toronto.  He has worked at Disney Research, Cambridge, MA, Arraycomm, San Jose, CA, the C. S. Draper
Laboratory, Cambridge, MA, and Ktaadn, Newton, MA.  His research
interests include communication and information theory,
error-correction coding, statistical signal processing and
optimization, security, and application of these disciplines to
computer architecture.

Dr.~Draper has received the NSERC Discovery Award, the NSF CAREER
Award, the 2010 MERL President's Award, departmental teaching awards
from the University of Toronto and the University of Wisconsin, the MIT Carlton
E. Tucker Teaching Award, an Intel Graduate Fellowship, Stanford's
Frederick E. Terman Engineering Scholastic Award, and a U.S. State
Department Fulbright Fellowship.
\end{IEEEbiographynophoto}

\end{document}